\documentclass[runningheads,a4paper]{llncs}
\usepackage{tikz}
\usepackage{pgfplots}
\pgfplotsset{compat=1.18}
\usepackage[T1]{fontenc}
\usepackage[utf8]{inputenc}
\usepackage{lmodern}
\usepackage{amsmath,amssymb,mathtools}
\usepackage[hidelinks]{hyperref}
\usepackage[nameinlink,capitalise]{cleveref}
\usepackage{microtype}

\newcommand{\OPT}{\operatorname{OPT}}
\newcommand{\score}{\operatorname{score}}
\newcommand{\FF}{\textnormal{\textsc{FarthestFirst}}}
\newcommand{\1}{\mathbf 1}

\newcounter{algorithm}

\title{Remote Matching: Exact-Cardinality Approximation and Tight UGC Hardness}
\titlerunning{Approximation and Hardness for Remote Matching}

\author{Arash Ahadi\inst{1}\thanks{Corresponding author.} \and
Morteza Alimi\inst{2} \and
Sharareh Alipour\inst{1} \and
Shayan Tayefeh\inst{3}}
\authorrunning{A. Ahadi et al.}

\institute{Tehran Institute for Advanced Studies (TeIAS), Khatam University, Iran\\
\email{aarash.ahadi.academic@gmail.com, sh.alipour@teias.institute}
\and
University of Augsburg, Augsburg, Germany\\
\email{morteza.alimi@uni-a.de}
\and
Sharif University of Technology, Iran\\
\email{shayantayefeh22@gmail.com}}

\begin{document}
\maketitle

\begin{abstract}

In the unrestricted max--min metric $T$-join problem, one seeks an even
terminal set $T$ maximizing the cost of a minimum $T$-join.  Iwata and
Ravi gave a factor-$3/2$ approximation for this problem.  We show that
this guarantee is tight under the Unique Games Conjecture: no
polynomial-time approximation with factor strictly smaller than $3/2$
exists under UGC.

We then consider the exact-cardinality variant, which prescribes an even
number \(k\) of terminals.  Writing \(p:=k/n\), we give a deterministic
polynomial-time \(\rho(p)\)-approximation for every feasible cardinality,
where

\[
\rho(p)=
\begin{cases}
7/2, & \makebox[1.5em][r]{$0$}<p\le2/7,\\
1/p, & 2/7\le p\le2/3,\\
1/[2(1-p)], & 2/3\le p\le7/8,\\
4, & 7/8\le p<1.
\end{cases}
\]
In particular, a factor-\(4\) approximation holds throughout the entire feasible cardinality range, the factor is at most \(7/2\) whenever \(0<k\le 6n/7\), and equals \(3/2\) at \(k=2n/3\). The algorithmic framework is based on optimal laminar cut packings,
their weighted tree representations, exact-cardinality rounding, and
tree dynamic programming.

\keywords{remote matching \and $T$-joins \and linear programming \and laminarization \and
approximation algorithms \and inapproximability}
\end{abstract}

\section{Introduction}
\label{sec:introduction}

Let $(V,d)$ be a finite rational metric space and let $n=|V|\ge2$.  For every
even set $T\subseteq V$, let $\mu(T)$ denote the cost of a minimum-weight
perfect matching on $T$; so $\mu(\varnothing)=0$.  A $T$-join is an edge
set whose odd-degree vertices are precisely the vertices of $T$.
Equivalently, $\mu(T)$ is the minimum cost of a $T$-join in the complete
graph on $V$ with edge costs $d$.  Minimum $T$-joins correct the degree parities of connected spanning
structures, as in Christofides' metric traveling-salesman algorithm and
the Chinese-postman problem~\cite{Christofides76,EJ73}.

The \emph{unrestricted max--min metric $T$-join problem} asks for an even
set $T\subseteq V$ maximizing $\mu(T)$, that is,

\[
  \Phi(V,d):=\max\{\mu(T):T\subseteq V,\ |T|\text{ even}\}.
\]

Iwata and Ravi~\cite{IR13} studied the weighted max--min $T$-join problem,
proved it NP-hard, and gave a factor-$3/2$ approximation based on a cut-packing upper
bound, uncrossing, and LP duality.  We use the convention that a factor
$\rho\ge1$ guarantees value at least $\OPT/\rho$; a value fraction is
its reciprocal.  Inapproximability statements concern deterministic
polynomial-time algorithms.  Their result left
open how closely this approximation ratio reflects the intrinsic
hardness of the problem.  We address this question already for positive
metric instances.  We give a reduction from Independent Set to
the problem.  Combining the resulting identity with known Independent
Set gaps shows that no polynomial-time algorithm can guarantee any fixed
value fraction strictly larger than
$(4+\sqrt2)/7\approx0.77$ unless $P=NP$, while under the Unique Games
Conjecture no fraction strictly larger than $2/3$ is possible.  Thus,
under UGC, the Iwata--Ravi guarantee is optimal.

We next turn to the prescribed-cardinality version.  For a given even integer
$k$ with $0\le k\le n$, the \emph{exact-$k$ remote matching problem} is

\[
  \OPT_k:=\max\{\mu(T):T\subseteq V,\ |T|=k\}.
\]

Remote matching belongs to metric diversity maximization: select $k$
points maximizing a distance-based objective~\cite{RRT91,CH01,IMMM14}.
For example, remote-clique uses the sum of pairwise distances,
remote-tree the minimum-spanning-tree weight, and remote-cycle the
minimum traveling-salesman tour cost.  These objectives, including
remote matching and remote pseudoforest, have also been studied through
composable coresets~\cite{IMMM14}.

Remote matching remained exceptional within this family for a long
time.  Chandra and Halld'orsson~\cite{CH01} gave an $O(\log k)$
approximation, while Mahabadi and Narayanan~\cite{MN23} resolved the
longstanding open question of a constant-factor approximation by giving
the first randomized $O(1)$-approximation for $k\le n/3$, with emphasis
on establishing a constant guarantee rather than optimizing its value.
Appendix~\ref{app:MN-factor4} gives an infinite family for which, under
a valid parity-repair rule, the expected approximation ratio of a single
execution tends to $4$.
 They also obtained constant-factor
composable coresets for remote matching and remote pseudoforest.
Alipour~\cite{Alipour26} further studied weighted max--min $T$-joins,
giving bounds and approximation results for prescribed cardinalities,
as well as additional structural and exact results for special weight
sets.

Moreover, the exact-cardinality problem is hard to approximate within any
factor strictly below $2$ unless $P=NP$~\cite{HIKT99,MN23}.

Our algorithms are deterministic, cover every feasible even
cardinality, and give guarantees that depend explicitly on the density
$p:=k/n$.  A key structural ingredient is a tight exact-cardinality
density profile for parity joins on weighted laminar trees.  We first
give a centered all-cut algorithm whose choice $m=n$ yields the
density-dependent guarantee $\min\{p,2(1-p)\}$, while the choice
$m=3k/2$ gives a uniform $2/7$ value guarantee throughout the
substantially larger range $k\le2n/3$, extending well beyond the range
$k\le n/3$ of the previous constant-factor algorithm.  A second,
allowed-cut algorithm gives a universal $1/4$ value guarantee for every
feasible even $k$.  Taking the best applicable algorithm yields
\noindent
\begin{minipage}[c]{0.56\textwidth}
\begin{equation}
\rho(p)=
\begin{cases}
7/2, & 0<p\le2/7,\\
1/p, & 2/7\le p\le2/3,\\
\dfrac{1}{2(1-p)}, & 2/3\le p\le7/8,\\
4, & 7/8\le p<1.
\end{cases}
\label{eq:intro-rho}
\end{equation}
\end{minipage}
\hfill
\begin{minipage}[c]{0.36\textwidth}
\centering
\begin{tikzpicture}
\begin{axis}[
    width=\linewidth,
    height=3.2cm,
    xmin=0, xmax=1,
    ymin=1, ymax=4.2,
    xlabel={$p=k/n$},
    ylabel={$\rho(p)$},
    axis lines=left,
    xtick={0,2/7,2/3,7/8,1},
    xticklabels={$0$,$\frac27$,$\frac23$,$\frac78$,$1$},
    ytick={1,2,3,4},
    yticklabels={$1$,$2$,$3$,$4$},
    tick label style={font=\scriptsize},
    label style={font=\scriptsize},
    line width=0.55pt,
    clip=false
]
\addplot[thick,domain=0:2/7] {7/2};
\addplot[thick,domain=2/7:2/3,samples=50] {1/x};
\addplot[thick,domain=2/3:7/8,samples=50] {1/(2*(1-x))};
\addplot[thick,domain=7/8:1] {4};
\addplot[only marks,mark=o,mark options={fill=white}] coordinates {(1,4) (0,3.5)};
\addplot[only marks,mark=*] coordinates {(0,1) (1,1)};
\end{axis}
\end{tikzpicture}
\end{minipage}
\par
The feasible endpoint cases $k=0,n$ are solved exactly.  Table~\ref{tab:intro-summary}
separates the constrained and unrestricted guarantees.

\begin{table}[t]
\centering
\caption{Approximation factors; the UGC bound is for the unrestricted problem.}
\label{tab:intro-summary}
\setlength{\tabcolsep}{14pt}
\begin{tabular}{lll}
\hline
 & Approximation & Inapproximability\\
\hline
Unrestricted
& $3/2$~\cite{IR13}
& $3/2-\varepsilon$ (this paper, UGC)\\[0.5ex]
Exact-$k$
& $\rho(p)$ (this paper)
& $2-\varepsilon$ (worst case over $k$)~\cite{HIKT99,MN23}\\
\hline
\end{tabular}
\end{table}

The first algorithm selects farthest-first representatives, packs all cuts
on their induced metric, and maximizes the packed odd-cut weight over
exact-$k$ subsets.  The second selects terminals or omitted vertices;
its allowed cuts also account for a fixed reference terminal set.
Concavity converts even-set averaging into exact-cardinality rounding.
The tools and algorithms appear in \cref{sec:framework,sec:approximation},
with structural proofs in \cref{sec:structural-proofs};
\cref{sec:hardness} proves the unrestricted hardness result.

\section{Hardness for the Unrestricted Problem}
\label{sec:hardness}

\begin{theorem}[Unrestricted inapproximability]
\label{thm:unrestricted-hardness}
For every fixed $\varepsilon>0$, the unrestricted max--min metric
$T$-join problem admits no polynomial-time approximation with factor
$7/(4+\sqrt2)-\varepsilon$ unless $P=NP$, and none with factor
$3/2-\varepsilon$ under the Unique Games Conjecture, even for metrics
with distances in $\{1,2,3\}$.
\end{theorem}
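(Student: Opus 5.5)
The plan is a gap-preserving reduction from Independent Set. To each graph $G=(U,E)$ with $|U|=N$ we associate a metric $(V,d)$ with distances in $\{1,2,3\}$ and $|V|=O(N)$. We prove an exact affine identity of the form
\[
  \Phi(V,d)=N+\alpha(G)
\]
(up to a fixed additive constant that does not affect the asymptotic ratio), and then feed known Independent Set gaps into it.

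The gadget I would use first is as follows. Each $u\in U$ contributes an original point $u$ and a private partner $u'$, with $d(u,u')=1$. Between original points we set $d(u,v)=1$ if $uv\in E$ and $d(u,v)=2$ otherwise. Distances involving partners are set so that partners are far apart: $d(u',v')=2$, and $d(u,v')$ takes values in $\{2,3\}$ according to adjacency. We then check the triangle inequality case by case; all values lie in $\{1,2,3\}$, so only a few patterns arise. If this exact choice breaks the identity, I would tune the partner distances within $\{1,2,3\}$ until both directions below hold.

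The first direction is the lower bound $\Phi\ge N+\alpha(G)$. Take a maximum independent set $S$ and write down an explicit even terminal set built from $S$ and the partners, for example all partners together with the originals in $S$, with a parity fix if needed. We then show that every perfect matching on $T$ has cost at least $N+|S|$. Because $S$ is independent, originals in $S$ can only be matched at distance at least $2$, and partners can only be matched cheaply to their own original, so a short counting argument over the matching edges gives the bound.

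The second direction is the upper bound $\mu(T)\le N+\alpha(G)$ for every even $T$. This is the step I expect to be the main obstacle, since it must hold for arbitrary $T$. For a given $T$ I would construct a cheap matching greedily. First match each $u'\in T$ to $u$ whenever $u\in T$, at cost $1$. Next match the remaining originals along edges of $G$, at cost $1$. The originals left over after a maximal such matching form an independent set, so there are at most $\alpha(G)$ of them, and they are matched at cost at most $2$ or $3$. The remaining work is a charging argument showing that the total cost is at most $N$ plus the number of leftover originals. Alternatively, one can bound $\mu(T)$ through the cut-packing dual used by Iwata and Ravi, exhibiting a feasible dual of value at most $N+\alpha(G)$.

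With the identity in hand, the inapproximability bounds follow from known gaps. Under UGC, for every $\delta>0$ it is hard to distinguish $\alpha(G)\ge(1/2-\delta)N$ from $\alpha(G)\le\delta N$, by the $(2-\delta)$-hardness of Vertex Cover. The identity turns this into a gap of $(3/2-\delta)N$ versus $(1+\delta)N$, so no polynomial-time algorithm has factor $3/2-\varepsilon$. Unless $P=NP$, the Khot--Minzer--Safra $(\sqrt2-\delta)$-hardness of Vertex Cover makes it hard to distinguish $\alpha(G)\ge(1-1/\sqrt2-\delta)N$ from $\alpha(G)\le\delta N$. The resulting value ratio tends to
\[
  \frac{1}{2-1/\sqrt2}=\frac{4+\sqrt2}{7},
\]
which rules out every factor $7/(4+\sqrt2)-\varepsilon$. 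Since all distances lie in $\{1,2,3\}$, both statements hold for such metrics.
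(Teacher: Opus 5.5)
\textbf{There is a genuine gap: the reduction gadget fails.} Your final paragraph is fine. The gap transfer and the computation $1/(2-1/\sqrt2)=(4+\sqrt2)/7$ are exactly how the paper concludes, once an identity of the form ``value $=$ (half the number of points) $+\,\alpha$'' is available. The problem is the gadget, which is the whole content of the reduction. As you specified it, $\Phi$ does not depend on $\alpha(G)$ at all.

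Take an even $T$ with $a$ full pairs $\{u,u'\}$, $b$ lone partners and $c$ lone originals. Match the full pairs internally (cost $1$ each), the lone partners among themselves (cost $2$ per pair), and the lone originals among themselves (cost at most $2$ per pair). At most one partner--original pair is left over, at cost at most $3$. This gives $\mu(T)\le a+b+c+1\le N+1$, while the set of all partners already gives $N$ for even $N$.

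Your lower-bound argument breaks at its first step. For $T=U'\cup S$, every $s\in S$ can be matched to its own partner $s'\in T$ at cost $1$. The remaining $N-|S|$ partners cost $2$ per pair, so $\mu(T)=N$, not $N+|S|$. The underlying defect is that you placed non-adjacent originals at distance $2$, the same as the baseline partner--partner distance, so independence earns nothing.

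Deferring to ``tuning the partner distances'' is not a proof, and private partners are the wrong template anyway. They add a full unit of baseline per graph vertex, while in the natural repairs an independent vertex earns only half a unit. For example, use the metric \eqref{eq:paired-metric} on the twin blow-up of $G$: each of $u,u'$ is adjacent to the other and to $v,v'$ for every $v\in N_G(u)$, and $\{uu'\}$ is the supplied perfect matching. This gives exactly $\Phi=N+\lfloor\alpha(G)/2\rfloor$ by \cref{lem:paired-identity}, so the ratio is only $1+a/2$, which is $5/4$ under UGC.

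\textbf{The missing idea} is to let the distance-$1$ pairs be edges of the graph itself rather than new points. Take a maximal matching of $G$. Its unmatched vertices are independent, so either there are more than $\delta N$ of them (certifying YES), or deleting them leaves $G_0$ on $m\ge(1-\delta)N$ vertices with a perfect matching. On $H=G_0\mathbin{\dot\cup}G_0$, with $M$ the union of the two perfect matchings, set $d=1$ on $M$, $2$ on the other edges of $H$, and $3$ on non-edges.

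An exchange argument shows that, for any even $T$, some optimal matching pairs every full $M$-pair internally. The singleton representatives $R$ then cost $\tfrac32|R|-\alpha'(H[R])$. This yields $\mu(T)=b-e+|I|/2$, where $e$ is the number of empty pairs and $I$ is the independent set left by a maximum matching of $H[R]$. Conversely, an even independent set together with all $M$-pairs disjoint from it attains $b+|I|/2$. Hence $\Phi=b+\lfloor\alpha(H)/2\rfloor=m+\alpha(G_0)$; the doubling only removes the floor.

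Two features make this work. Because $M\subseteq E_H$, an independent set uses at most one endpoint per pair. And every metric point is a graph vertex, so the bonus $\alpha(G_0)$ is on the same scale as the baseline $m$. Your transfer then goes through with $m,\alpha(G_0)$ in place of $N,\alpha(G)$. In the YES case $\alpha(G_0)\ge(a-2\delta)m$, and in the NO case $\Phi\le m/(1-\delta)$.
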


We prove the theorem in two steps.  First, a simple paired metric turns
the unrestricted optimum into an exact expression involving the
independence number.  We then transfer standard Independent Set gaps
through this identity.
For a graph $H$, let $\alpha(H)$ and $\alpha'(H)$ denote its independence
number and maximum-matching number, respectively.

Let $H=(W,E_H)$ have $2b$ vertices and a supplied perfect matching $M$.
Define a metric over $W$; for distinct $u,v\in W$, define
\begin{equation}
d_H(u,v)=
\begin{cases}
1, & uv\in M,\\
2, & uv\in E_H\setminus M,\\
3, & uv\notin E_H,
\end{cases}
\qquad d_H(v,v)=0.
\label{eq:paired-metric}
\end{equation}
This is a metric because a triangle contains at most one edge of $M$, so the sum
of any two edge lengths is at least $3$, while the third is at most $3$.

\begin{lemma}[Paired-metric identity]
\label{lem:paired-identity}
For the metric \eqref{eq:paired-metric},
\begin{equation}
  \Phi(W,d_H)=b+\left\lfloor\frac{\alpha(H)}2\right\rfloor.
\label{eq:paired-identity}
\end{equation}
\end{lemma}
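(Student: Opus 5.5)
The plan is to prove the two inequalities separately. For the upper bound we exhibit, for an arbitrary even $T$, a cheap perfect matching on $T$. For the lower bound we choose a specific $T$ built from a maximum independent set, and certify $\mu(T)$ from below by a vertex-potential (dual) argument.

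\emph{Upper bound.} Fix an even $T\subseteq W$. Let $a$ be the number of edges of $M$ with both ends in $T$, and let $S$ be the set of vertices of $T$ whose $M$-partner is not in $T$, with $s:=|S|$. The $a$ full pairs and the $s$ single vertices lie in distinct $M$-pairs, so $a+s\le b$. Moreover $s$ is even, because $|T|=2a+s$ is even.

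We match $T$ as follows. Each full pair is matched by its $M$-edge, at cost $1$ each. Inside $S$ we take a maximum matching of $H[S]$, say with $m$ edges. None of these edges lies in $M$, so each costs $2$. The remaining $s-2m$ vertices of $S$ (an even number) are paired arbitrarily at cost at most $3$. This gives
\[
\mu(T)\le a+2m+\tfrac32(s-2m)=(a+s)+\tfrac{s-2m}{2}.
\]
The $s-2m$ vertices left uncovered by a maximum matching of $H[S]$ form an independent set of $H$, so $s-2m\le\alpha(H)$. Together with $a+s\le b$ this gives $\mu(T)\le b+\alpha(H)/2$. Since $\mu(T)$ is an integer, $\mu(T)\le b+\lfloor\alpha(H)/2\rfloor$.

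\emph{Lower bound.} Let $I$ be a maximum independent set and set $c:=\lfloor\alpha(H)/2\rfloor$. Choose $I'\subseteq I$ with $|I'|=2c$. Since $M\subseteq E_H$, the set $I'$ contains at most one vertex from each $M$-pair, so it touches exactly $2c$ pairs. Let $T:=I'\cup U$, where $U$ is the union of the $b-2c$ pairs of $M$ untouched by $I'$. Then $|T|=2c+2(b-2c)$ is even.

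Define potentials $y_v=3/2$ for $v\in I'$ and $y_v=1/2$ for $v\in U$. The key step is checking that $d_H(u,v)\ge y_u+y_v$ for all distinct $u,v\in T$:
\begin{itemize}
\item two vertices of $I'$ are non-adjacent in $H$, so their distance is $3$;
\item a vertex of $I'$ and a vertex of $U$ are not $M$-partners, so their distance is at least $2$;
\item two vertices of $U$ are at distance at least $1$.
\end{itemize}
Hence every perfect matching on $T$ costs at least $\sum_{v\in T}y_v=3c+(b-2c)=b+c$. This gives $\Phi(W,d_H)\ge\mu(T)\ge b+c$.

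The main obstacle is the upper bound: one must route the odd singletons through $H$-edges and notice that the leftover vertices form an independent set. The parity and integrality bookkeeping that yields the floor is the only delicate point.
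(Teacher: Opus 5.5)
Your proof is correct, but it certifies the lower bound differently from the paper.

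The paper first derives an exact formula for every even $T$:
$\mu(T)=f+\tfrac{3s}{2}-\alpha'(H[R])=b-e+\tfrac{|I|}{2}$.
Getting this needs an exchange argument showing that some optimal matching pairs every full $M$-pair internally. It also needs the observation that the singleton part is then optimized by a maximum matching of $H[R]$. Both inequalities of the lemma are read off from this formula. The lower bound comes from evaluating it on the same set $T=I'\cup U$ that you use.

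You split the identity into a primal half and a dual half. For the upper bound you exhibit just one matching, which is exactly the one the paper proves optimal, so no exchange argument is needed. For the lower bound you certify $\mu(T)\ge b+c$ directly with the vertex potentials $3/2$ on $I'$ and $1/2$ on $U$, which satisfy $y_u+y_v\le d_H(u,v)$ on $T$.

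Your route is shorter, and each half can be checked locally. The paper's route gives more information: an exact value of $\mu(T)$ for every terminal set, and hence the structure of optimal terminal sets (no empty pairs, independent singleton set). The later gap-transfer lemma only uses the value of $\Phi$, so your version suffices there.
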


\begin{proof}
Fix an even $T\subseteq W$.  Call a pair of $M$ full, singleton, or
empty according as it contains two, one, or zero vertices of $T$, and
let $f,s,e$ be the respective numbers of pairs.  Since $|T|=2f+s$,
$s$ is even.  Let $R$ contain the selected vertex from each singleton pair.

There is an optimum matching on $T$ that matches every full pair
internally.  Indeed, if a full pair $uv\in M$ is matched instead by
$ua$ and $vb$, those two edges cost at least $4$, while replacing them
by $uv$ and $ab$ costs at most $1+3=4$.  Repeating this exchange fixes
all full pairs without increasing the cost; an already fixed pair
cannot be disturbed by a later exchange.

It remains to match the $s$ vertices of $R$.  An edge inside $R$ has
length $2$ exactly when it lies in $H[R]$, and otherwise has length
$3$.  Thus each edge of $H[R]$ used in a perfect matching saves one unit,
and the maximum possible saving is $\alpha'(H[R])$: a maximum matching
of $H[R]$ leaves an independent set $I$ of
$s-2\alpha'(H[R])$ vertices, which can be paired arbitrarily.  Hence
\begin{equation}
 \mu(T)=f+\frac{3s}{2}-\alpha'(H[R])
       =b-e+\frac{|I|}{2}.
\label{eq:terminal-value}
\end{equation}
Since $e\ge0$ and $|I|\le\alpha(H)$, this implies
$\mu(T)\le b+\lfloor\alpha(H)/2\rfloor$.

Conversely, delete at most one vertex from a maximum independent set to
obtain an even independent set $I$ of size
$2\lfloor\alpha(H)/2\rfloor$.  Since $M\subseteq E_H$, the set $I$
contains at most one endpoint of each pair of $M$.  Let $T$ contain $I$
and both endpoints of every $M$-pair disjoint from $I$.  Then $T$ has no
empty pairs, its singleton representatives are exactly $I$, and
$\alpha'(H[I])=0$.
Equation~\eqref{eq:terminal-value} gives equality.
\end{proof}

\begin{lemma}[Independent Set gap transfer]
\label{lem:is-gap-transfer}
Fix $a\in(0,1/2]$.  Suppose that, for arbitrarily small fixed
$\delta>0$, it is NP-hard to distinguish $N$-vertex graphs with
\[
  \alpha(G)\ge(a-\delta)N
  \qquad\text{from}\qquad
  \alpha(G)\le\delta N.
\]
Then, for every fixed $\beta>1/(1+a)$, it is NP-hard to guarantee a
$\beta$ fraction of the unrestricted optimum, already for the metrics
\eqref{eq:paired-metric}.
\end{lemma}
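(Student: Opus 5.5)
The plan is to reduce from the assumed Independent Set gap problem to the unrestricted problem on the paired metrics, using \cref{lem:paired-identity} as the exact bridge. Given an $N$-vertex graph $G$, we build $H$ from $G$ by attaching a private pendant partner $v'$ to every $v\in V(G)$, that is, $W=V(G)\cup\{v':v\in V(G)\}$ and $E_H=E(G)\cup M$ with $M=\{vv':v\in V(G)\}$. Then $H$ has $2b=2N$ vertices and $M$ is a supplied perfect matching, so the metric \eqref{eq:paired-metric} is defined and has polynomial size.

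\textbf{Computing $\alpha(H)$.} The key structural step is to show that the pendant vertices dominate, that is,
\[
  \alpha(H)=N .
\]
The set $\{v'\}$ of all partners is independent, since partners are adjacent only to their own $v$, so $\alpha(H)\ge N$. For the upper bound, every independent set of $H$ contains at most one endpoint of each $M$-edge, giving $\alpha(H)\le N$. This, however, makes $\alpha(H)$ independent of $G$, so the reduction is useless. I would therefore refine the construction so that the identity retains information about $\alpha(G)$.

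\textbf{Refined construction.} The natural fix is to join the partners $v'$ into a clique (or make them all mutually adjacent in $H$). Then an independent set contains at most one partner. Since $I$ cannot contain both $v$ and $v'$, we get $\alpha(G)\le\alpha(H)\le\alpha(G)+1$. Now \cref{lem:paired-identity} yields
\[
  \Phi(W,d_H)=N+\lfloor \alpha(G)/2\rfloor+O(1).
\]
In the YES case this is at least $N(1+(a-\delta)/2)-O(1)$, and in the NO case it is at most $N(1+\delta/2)+O(1)$.

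\textbf{Translating the gap.} The ratio of the two cases tends to $1/(1+a/2)=2/(2+a)$ as $\delta\to0$ and $N\to\infty$. An algorithm guaranteeing a $\beta$ fraction with $\beta>2/(2+a)$ would separate the two cases, which gives hardness for every such $\beta$. This falls short of the claimed threshold $1/(1+a)$, since $1/(1+a)<2/(2+a)$ for $a>0$.

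\textbf{Main obstacle.} To reach $1/(1+a)$, the optimum in the YES case must be roughly $b(1+a)$, not $b(1+a/2)$. By \cref{lem:paired-identity} this requires $\alpha(H)\approx 2ab$, which means $H$ should essentially be $G$ itself on $2b=N$ vertices with $\alpha(H)$ inherited from $\alpha(G)$. So the real task is to supply a perfect matching $M\subseteq E(G)$ without changing $\alpha$ much. I would try two approaches:
\begin{enumerate}
\item Argue that hard gap instances can be assumed to contain a perfect matching, for instance by padding $G$ with few extra vertices. In the NO case $\alpha(G)\le\delta N$, so $G$ has a matching covering all but $\delta N$ vertices.
\item Take a maximum matching of $G$ and handle each unmatched vertex by adding a fresh partner adjacent to everything. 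This changes $\alpha$ by at most one in total among the new vertices.
\end{enumerate}
In the YES case the unmatched vertices may number up to about $(1-2\alpha'(G)/N)N$. To control this, I would first apply the gap assumption to $G$ joined with a clique, or equivalently assume by padding that $\alpha'(G)\ge (1-\delta)N/2$ in both cases, at a cost of $O(\delta N)$ in $\alpha$. With $b\approx N/2$, this gives $\Phi\approx (N/2)(1+a)$ versus $(N/2)(1+O(\delta))$, and the required ratio $1/(1+a)$ follows.
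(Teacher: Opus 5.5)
Your final route (match $G$, repair the unmatched vertices, and read $\Phi$ off \cref{lem:paired-identity}) has the right shape and is close to the paper's. However, the step that controls the number $u$ of unmatched vertices in the YES case is not justified, and padding cannot deliver it.

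Joining $G$ with a clique, i.e.\ adding $u$ universal partners, leaves $\alpha$ unchanged but adds $u$ vertices. In a YES instance, $u$ is bounded only by $\alpha(G)$, which can be a constant fraction of $N$. Then $b=(N+u)/2$ grows while $\lfloor\alpha(H)/2\rfloor$ does not, so the YES fraction of the padded graph drops to about $a/(1+u/N)$. The loss is therefore not $O(\delta N)$ in $\alpha$. Likewise, you cannot just ``assume'' the hardness hypothesis for graphs with $\alpha'(G)\ge(1-\delta)N/2$ without a reduction. Your Approach~1 has the same defect: a few padding vertices suffice only in the NO case.

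The missing idea is the contrapositive of the fact you already noticed for NO instances. The unmatched vertices of a maximal matching form an independent set. So $u>\delta N$ is impossible under NO and certifies YES, and the reduction can answer such instances directly. On the remaining instances, $u\le\delta N$ in both cases.

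The paper then deletes the unmatched set, obtaining $G_0$ with a perfect matching and $m=|V(G_0)|\in[(1-\delta)N,N]$. Your universal-partner repair would also work at this point, since it changes $b$ by at most $\delta N/2$. This yields $\Phi_{\rm YES}\ge(1+a-2\delta)m$ versus $\Phi_{\rm NO}\le m/(1-\delta)$, hence the threshold $1/(1+a)$ as $\delta\to0$.

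A minor point: your ``$\approx$'' hides the additive $1/2$ lost in $\lfloor\alpha(H)/2\rfloor$. The paper removes it exactly by taking $H=G_0\mathbin{\dot\cup}G_0$, so that $\alpha(H)=2\alpha(G_0)$ and $\Phi=m+\alpha(G_0)$.
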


\begin{proof}
Fix $\beta>1/(1+a)$ and choose a sufficiently small permitted
$\delta\in(0,a/2)$, as specified below.  Let $G$ be a gap instance.
Compute a maximal matching and let $U$ be its
unmatched vertices.  Since $U$ is independent, $|U|>\delta N$ certifies
a YES instance.  Otherwise set $G_0:=G-U$.  The maximal matching is a
perfect matching of $G_0$, and
$m:=|V(G_0)|\in[(1-\delta)N,N]$.

In the YES case,
$\alpha(G_0)\ge(a-2\delta)N\ge(a-2\delta)m$, whereas in the NO case,
$\alpha(G_0)\le\delta N\le\delta m/(1-\delta)$.  Form
$H:=G_0\mathbin{\dot\cup}G_0$ and supply it with the union of the two
perfect matchings.  Then $H$ has $2m$ vertices and
$\alpha(H)=2\alpha(G_0)$, so \cref{lem:paired-identity} gives exactly
\[
 \Phi(V(H),d_H)=m+\alpha(G_0).
\]
Hence
\[
 \Phi_{\rm YES}\ge(1+a-2\delta)m,
 \qquad
 \Phi_{\rm NO}\le\frac{m}{1-\delta}.
\]
Continuity at $\delta=0$ permits the choice of $\delta$ to satisfy
\[
 \beta(1+a-2\delta)>\frac1{1-\delta}.
\]
Thus $\beta\Phi_{\rm YES}>\Phi_{\rm NO}$.  A $\beta$-fraction algorithm would
therefore distinguish the two cases, since the value of its returned
terminal set is computable by minimum-weight perfect matching.
\end{proof}

\par
\begin{trivlist}
\item[\hskip\labelsep{\itshape Proof of \cref{thm:unrestricted-hardness}.}]
The completed Independent Set gap of Khot, Minzer, and Safra has
$a=1-1/\sqrt2$~\cite{KhotMinzerSafra2023,KhotMinzerSafra2025}.
Corollary~1.9 and Section~1.5 of~\cite{KhotMinzerSafra2025} give the gap
and explain the subsequent proof of its expansion hypothesis.  By \cref{lem:is-gap-transfer}, no polynomial-time
algorithm can guarantee a value fraction larger than
$1/(1+a)=(4+\sqrt2)/7$ unless $P=NP$.  Equivalently, no approximation
factor below $7/(4+\sqrt2)$ is possible.

Under UGC, the Khot--Regev gap~\cite{KhotRegev2008} has $a=1/2$.
The same lemma gives value threshold $2/3$, equivalently approximation
factor $3/2$.  
\end{trivlist}

\section{Algorithmic Tools and Preliminaries}
\label{sec:framework}

Throughout, $(V,d)$ is a finite rational metric space, $n:=|V|$, and
running times refer to the binary encoding model.  For nonempty
$X\subseteq V$, let $d_X:=d|_{X\times X}$ and $E_X:=\binom X2$; also put
$d(v,Y):=\min_{y\in Y}d(v,y)$ for nonempty $Y\subseteq V$.
For a designated subset $C\subseteq V$, we refer to its elements as
\emph{centers}.

A \emph{shore} of $X$ is a nonempty proper subset $S\subsetneq X$, with
cut $\delta_X(S):=\{uv\in E_X:|\{u,v\}\cap S|=1\}$.
The shores $S$ and $X\setminus S$ represent the same unoriented cut.
For each nonempty $X\subseteq V$, fix an arbitrary root $r\in X$ and
represent every cut by its unique shore not containing $r$.  Accordingly,
$\mathcal A(X):= \{S:\varnothing\ne S\subseteq X\setminus\{r\}\}$
is the family of representatives of all nontrivial cuts of $X$.  When
$X=V$, we omit the subscript from $E_X$, $d_X$, and $\delta_X$.  For
$T\subseteq X$, a cut represented by $S$ is \emph{$T$-odd} if
$|S\cap T|$ is odd.  For any proposition $\mathcal P$, the notation
$\1[\mathcal P]$ denotes its indicator.

The symmetric difference of a $P$-join and a $Q$-join is a
$(P\triangle Q)$-join.  Thus, for even $P,Q\subseteq V$, nonnegative
costs and the matching characterization give
\begin{equation}
 \mu(P\triangle Q)\le\mu(P)+\mu(Q).
 \label{eq:matching-triangle}
\end{equation}

\begin{lemma}[Projection to centers]
\label{lem:projection}
Let $D\subseteq V$ be even, let $C\subseteq V$, and let
$\pi:D\to C$.  If $E\subseteq C$ consists of the centers having an odd
number of preimages under $\pi$, then $E$ is even and
\[
  \mu(D\triangle E)
  \le \sum_{v\in D}d(v,\pi(v)).
\]
In particular, if $d(v,\pi(v))\le R$ for every $v\in D$, then
$\mu(D\triangle E)\le |D|R$.
\end{lemma}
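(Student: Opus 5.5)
The plan is to build an explicit $(D\triangle E)$-join whose cost is at most $\sum_{v\in D}d(v,\pi(v))$, and then invoke the matching characterization of $\mu$.

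\emph{Parity of $E$.} For each center $c\in C$ the number of preimages $|\pi^{-1}(c)|$ has the parity of $\1[c\in E]$. Summing over $c$ gives $|E|\equiv\sum_c|\pi^{-1}(c)|=|D|\pmod 2$, so $E$ is even because $D$ is.

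\emph{The join.} Take the multiset $J$ of edges $\{v\pi(v)\}$ for $v\in D$, omitting the degenerate loops with $\pi(v)=v$, which contribute nothing to either parity or cost. Compute the parity of each vertex's degree in $J$:
\begin{itemize}
\item a vertex $x$ gets one incidence if $x\in D$ and $\pi(x)\ne x$;
\item it gets $|\pi^{-1}(x)\setminus\{x\}|$ further incidences as a center.
\end{itemize}
Adding the loop back consistently, the total parity is $\1[x\in D]+|\pi^{-1}(x)|\equiv\1[x\in D]+\1[x\in E]$. This holds because counting the loop $x\pi(x)=xx$ twice (once as the $D$-incidence, once as the center-incidence) does not change parity. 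So the odd-degree vertices of $J$ are exactly $D\triangle E$.

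\emph{From join to matching.} Reduce multiplicities modulo $2$; this keeps the parities and only lowers the cost. The result is a $(D\triangle E)$-join in the complete graph with cost at most $\sum_{v\in D}d(v,\pi(v))$. By the metric $T$-join/matching equivalence stated in the introduction, $\mu(D\triangle E)$ is at most this cost. For the "in particular" clause, bound each term by $R$ to get $|D|R$.

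The only delicate step is the bookkeeping for vertices lying in both $D$ and $C$, especially fixed points of $\pi$. I would handle this by the explicit parity computation above rather than by an informal "paths" argument.
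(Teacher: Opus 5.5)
Your proof is correct and takes essentially the paper's approach: the paper likewise takes the attachment edges $v\pi(v)$ modulo two as a $(D\triangle E)$-join, with loops contributing nothing, and bounds $\mu(D\triangle E)$ by the weight of that join. Your explicit degree-parity case analysis and your counting argument for $|E|\equiv|D|\pmod 2$ fill in details the paper leaves implicit.
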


\begin{proof}
Take the attachment edges $v\pi(v)$ modulo two.  The symmetric difference
of their endpoint sets is $D\triangle E$: the first endpoints contribute
$D$, the second contribute $E$, and a loop contributes nothing.  Thus
they form a $(D\triangle E)$-join of weight at most
$\sum_{v\in D}d(v,\pi(v))$.
\end{proof}

Fix a nonempty $X\subseteq V$ and a cut family
$\mathcal F\subseteq\mathcal A(X)$.  The associated cut-packing LP and
its covering dual are standard~\cite{EJ73,Schrijver03}

\noindent
\begin{minipage}[t]{0.57\textwidth}
\small
\refstepcounter{equation}\label{eq:cut-packing}
\[
\begin{aligned}
W(X,\mathcal F)=\max
&\sum_{S\in\mathcal F}y_S\\
\text{s.t.}\quad
&\sum_{\substack{S\in\mathcal F\\ e\in\delta_X(S)}}y_S
   \le d_X(e)
   \quad(e\in E_X),\\
&y_S\ge0
   \quad(S\in\mathcal F).
\end{aligned}
\tag{\theequation}
\]
\end{minipage}%
\hfill
\vrule 
\hfill
\begin{minipage}[t]{0.39\textwidth}
\small
\refstepcounter{equation}\label{eq:covering-dual}
\[
\begin{aligned}
\min
&\sum_{e\in E_X}d_X(e)x_e\\
\text{s.t.}\quad
&x(\delta_X(S))\ge1
   \quad(S\in\mathcal F),\\
&x_e\ge0
   \quad(e\in E_X).
\end{aligned}
\tag{\theequation}
\]
\end{minipage}
where
$x(\delta_X(S)):=\sum_{e\in\delta_X(S)}x_e$.
A vector $y$ satisfying \eqref{eq:cut-packing} is a \emph{feasible
packing}.  For such $y$ and even $T\subseteq X$, define the
\emph{packed parity score}
$\score_y(T)=\sum_{S\in\mathcal F}y_S \1[|S\cap T|\equiv1\pmod2].$

\begin{lemma}[Packing certificate~\cite{EJ73,Schrijver03}]
\label{lem:certificate}
Let $\mathcal F\subseteq\mathcal A(X)$ and let $y$ be feasible for
\eqref{eq:cut-packing}.  For every even $T\subseteq X$,
$\score_y(T)\le\mu(T)$.  If $\mathcal F$ contains the representative of
every $T$-odd cut of $X$, then
\[
  \mu(T)\le W(X,\mathcal F).
\]
In particular, $W(X,\mathcal A(X))\ge\mu(T)$ for every even
$T\subseteq X$.
\end{lemma}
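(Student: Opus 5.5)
The plan is to prove the two inequalities separately. The first is elementary weak duality against a minimum $T$-join. The second is the Edmonds--Johnson min--max theorem for $T$-cuts, applied to the subfamily of $T$-odd cuts.

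\emph{Step 1 (the score bound).} Fix even $T\subseteq X$ and a minimum-weight perfect matching $N$ on $T$. Its edges lie in $E_X$, since $T\subseteq X$, and $d(N)=\mu(T)$. The edge set $N$ is a $T$-join. For every $T$-odd cut $\delta_X(S)$, the degree-sum parity argument inside $S$ shows that $|N\cap\delta_X(S)|$ is odd, hence at least $1$. Therefore
\[
\score_y(T)\le\sum_{S\in\mathcal F}y_S\,|N\cap\delta_X(S)|
=\sum_{e\in N}\sum_{S\in\mathcal F:\,e\in\delta_X(S)}y_S\le\sum_{e\in N}d_X(e)=\mu(T),
\]
where the last inequality uses feasibility of $y$ in \eqref{eq:cut-packing}.

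\emph{Step 2 (reduction to $T$-odd cuts).} Let $\mathcal F_T\subseteq\mathcal F$ be the set of representatives of $T$-odd cuts. By hypothesis, $\mathcal F_T$ consists of all of them. Any feasible packing supported on $\mathcal F_T$ is feasible for $\mathcal F$, so $W(X,\mathcal F)\ge W(X,\mathcal F_T)$. It therefore suffices to show $W(X,\mathcal F_T)\ge\mu(T)$; the case $T=\varnothing$ is trivial. By LP duality, $W(X,\mathcal F_T)$ equals the optimum of the covering LP \eqref{eq:covering-dual} over $\mathcal F_T$, which is $\min\{d_X(x):x\ge0,\ x(\delta_X(S))\ge1\ \text{for all $T$-odd } S\}$.

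\emph{Step 3 (the polyhedral input, the main step).} By Edmonds--Johnson~\cite{EJ73,Schrijver03}, the polyhedron $\{x\ge0:x(\delta_X(S))\ge1\ \forall\,T\text{-odd }S\}$ is the dominant of the convex hull of incidence vectors of $T$-joins in the complete graph on $X$. With nonnegative costs $d_X$, its minimum is therefore attained at a $T$-join of $K_X$. Any $T$-join contains a perfect matching on $T$ by path decomposition and shortcutting, using the triangle inequality, so its cost is at least $\mu(T)$. The minimum over $K_X$ is thus exactly $\mu(T)$, and this gives $W(X,\mathcal F_T)=\mu(T)$. Equivalently, one can cite directly the fractional (indeed half-integral) Edmonds--Johnson packing theorem: the maximum packing of $T$-cuts under costs $d_X$ equals the minimum $T$-join cost. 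This polyhedral fact is the only nontrivial ingredient, and I would invoke it rather than reprove it.

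\emph{Step 4 (the final claim).} The family $\mathcal A(X)$ contains the representative of every nontrivial cut of $X$, so the hypothesis of Step 2 holds for every even $T\subseteq X$. The bound $W(X,\mathcal A(X))\ge\mu(T)$ follows.
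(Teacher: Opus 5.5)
Your proof is correct and follows the paper's argument. The score bound charges each $T$-odd cut to a matching edge crossing it, and the second inequality comes from strong duality plus the Edmonds--Johnson description of the $T$-join dominant. The only differences are cosmetic: you apply monotonicity on the primal side (restricting the packing to $\mathcal F_T$) where the paper applies it on the dual side (the covering region for $\mathcal F$ lies inside the $T$-join dominant), and you spell out the shortcutting step showing that the minimum $T$-join cost in $K_X$ equals $\mu(T)$.
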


\begin{proof}
Every perfect matching on $T$ crosses each $T$-odd cut.  Assign each
scored cut to a matching edge crossing it; summing edge packing constraints
gives $\score_y(T)\le\mu(T)$.  If $\mathcal F$ contains all $T$-odd cuts,
every feasible covering vector lies in the $T$-join dominant, whose minimum
cost is $\mu(T)$~\cite{EJ73,Schrijver03}.  Strong duality gives the second
inequality.
\end{proof}

Two shores $S,T\subsetneq X$ \emph{cross} if all four sets
$S\cap T$, $S\setminus T$, $T\setminus S$, and
$X\setminus(S\cup T)$ are nonempty.  A family of root-oriented shores is \emph{laminar} if its members
are pairwise disjoint or nested; equivalently, no two members cross.  A cut family $\mathcal F$ is \emph{uncrossable} if,
whenever two representing shores $S,T$ cross, either the cuts represented
by both $S\cap T,S\cup T$ or those represented by both
$S\setminus T,T\setminus S$ belong to $\mathcal F$.

For nonnegative capacities $c\in\mathbb Q_+^{E_X}$, write
$c(\delta_X(S)):=\sum_{e\in\delta_X(S)}c_e$.  For a packing vector $y$, write
$\operatorname{supp}(y):=\{S\in\mathcal F:y_S>0\}$.

The following standard uncrossing principle~\cite{HLST88,GLS88,IR13}
gives a polynomial-time computable laminar optimum.
\begin{lemma}[Computable laminar optimum]
\label{lem:laminarization}
Let $\mathcal F\subseteq\mathcal A(X)$.  If $\mathcal F$ is uncrossable
and a minimum-capacity member of $\mathcal F$ can be found in polynomial
time for arbitrary nonnegative capacities on $E_X$, then
\eqref{eq:cut-packing} has a polynomial-time computable optimum whose
positive support is laminar.
\end{lemma}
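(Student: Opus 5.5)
We obtain the optimum in two stages: first an optimal packing via the ellipsoid method applied to the dual, then a laminar optimum via the standard uncrossing argument of~\cite{HLST88,GLS88,IR13}.

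\emph{Stage 1: an optimal packing.} The covering dual \eqref{eq:covering-dual} has $|E_X|$ variables. Its separation problem is exactly the assumed oracle: given $x\ge0$, find a minimum-capacity member of $\mathcal F$ with capacities $c=x$, and check whether the value is at least $1$. By the ellipsoid method~\cite{GLS88}, we solve the dual in polynomial time. The same run yields polynomially many violated cuts $\mathcal F'\subseteq\mathcal F$ whose restricted dual has the same optimum. We then solve the restricted primal LP over $\mathcal F'$ explicitly, which gives an optimal packing $y$ with polynomial support.

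\emph{Stage 2: uncrossing.} We use the potential $\Psi(y)=\sum_S y_S|S|\,|X\setminus S|$, or the standard lexicographic potential, and iterate. While two support members $S,T$ cross, let $\varepsilon=\min\{y_S,y_T\}$. Subtract $\varepsilon$ from both $y_S$ and $y_T$, and add $\varepsilon$ to whichever pair ($S\cap T,S\cup T$ or $S\setminus T,T\setminus S$) lies in $\mathcal F$ by uncrossability. The objective is unchanged.

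Feasibility holds because, for each edge $e$, the number of new cuts crossing $e$ is at most the number of old ones. This follows from the pointwise inequalities
\[
\1_{\delta(S\cap T)}+\1_{\delta(S\cup T)}\le\1_{\delta(S)}+\1_{\delta(T)}
\quad\text{and}\quad
\1_{\delta(S\setminus T)}+\1_{\delta(T\setminus S)}\le\1_{\delta(S)}+\1_{\delta(T)}.
\]

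Representatives need a check. The shores are root-oriented with $r\notin S\cup T$, so $S\cap T$, $S\setminus T$ and $T\setminus S$ are valid representatives whenever nonempty. Crossing guarantees they are nonempty, and also that $S\cup T\ne X$. So every new set is a member of $\mathcal A(X)$ representing the named cut.

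\emph{Polynomial termination.} This is the main obstacle. Naive uncrossing need not terminate quickly. We therefore follow~\cite{HLST88}: maintain a laminar family $\mathcal L$ and insert the support sets one at a time. When inserting $S$, uncross it only against members of $\mathcal L$. The number of crossing members strictly decreases in each step, which bounds the work per insertion polynomially.

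A cleaner alternative avoids this bookkeeping. Take a maximal laminar subfamily $\mathcal L\subseteq\mathcal F'$ of tight constraints and use the standard rank argument~\cite{GLS88}. The restricted LP over the laminar family $\mathcal L$ spanned by the uncrossed tight cuts has the same optimum, and solving it explicitly gives a basic optimum with laminar support.

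Either way, the output is feasible, has objective $W(X,\mathcal F)$, and has laminar positive support, computed in polynomial time.
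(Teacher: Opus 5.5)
\paragraph{A genuine gap: the polynomial-time computation.} Your Stage~1 is fine, and so are the coordinatewise cut-incidence inequalities and the check that the new root-oriented shores are valid representatives. Together, though, these only show that a laminar optimum \emph{exists}. The content of the lemma is that one is \emph{polynomial-time computable}, and you leave exactly that step unproved.

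Your insertion procedure asserts that ``the number of crossing members strictly decreases,'' but this is not shown. Consider one step with $y_T<y_S$:
\begin{itemize}
\item it deletes $T$ from $\mathcal L$;
\item it leaves a residual copy of $S$;
\item it adds $S\cap T$ and $S\cup T$ (or the two differences).
\end{itemize}
These new sets can cross further members of $\mathcal L$ and, after later steps, each other. For example, allow all cuts, let $T=\{1,2,3\}\supset T'=\{2,3\}$ lie in $\mathcal L$, and let $S=\{1,2,4\}$. Uncrossing $S\cap T=\{1,2\}$ with $T'$ regenerates $\{1,2,3\}=T$, which crosses the residual $S$. Nothing you say bounds this branching. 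Your potential $\sum_S y_S|S||X\setminus S|$ only drops by multiples of $1/Q$, where $Q$ is a common denominator of $y$, so it gives only a pseudo-polynomial bound.

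\paragraph{The ``cleaner alternative'' is false, not just unproved.} The rank lemma says that the tight constraints at a vertex $x$ of the covering LP are \emph{linearly} spanned by a maximal laminar family of tight cuts. It gives no nonnegative multipliers on that family. Here is a counterexample.
\begin{itemize}
\item Take $X=\{r,a,b,c\}$ with $d(r,a)=d(b,c)=2$ and all other distances $3$ (a tree metric).
\item Weight $1$ on $\{a\},\{b\},\{c\},\{a,b,c\},\{b,c\}$ is a packing of value $5$.
\item $x=\tfrac12$ on $ra,ac,cb,br$ (zero elsewhere) is an optimal vertex of the covering LP, also of cost $5$.
\item The cut $\{a,c\}$ is tight for this $x$, and $\mathcal L=\{\{a\},\{b\},\{c\},\{a,b,c\},\{a,c\}\}$ is a maximal laminar tight family.
\end{itemize}
Yet adding the packing constraints of the edges $ra$ and $bc$ shows that every packing supported on $\mathcal L$ has mass at most $4$.

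\paragraph{How the paper avoids this.} The paper never runs uncrossing as an algorithm. It maximizes $\sum_S(1-\eta h(S))y_S$ with $h(S)=|S|(|X|-|S|)=|\delta_X(S)|$. Here $\eta$ is chosen small enough that perturbed optima are genuine optima; this uses the $1/(N!)^2$ gap between vertex masses, with $N=|E_X|$, together with $H(y)\le\sum_e d_X(e)$. An uncrossing step keeps the mass and strictly lowers $H$. Hence every perturbed optimum is automatically laminar. Finally, the perturbed dual constraint $x(\delta_X(S))\ge1-\eta h(S)$ is just $(x+\eta)(\delta_X(S))\ge1$, so the given min-capacity oracle separates it.
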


A proof, including polynomial-bit complexity, is given in
Appendix~\ref{app:laminarization}.  For $\mathcal A(X)$ the required
oracle is a global minimum cut; the restricted family is treated in
\cref{sec:approximation}.

Let $y$ be a feasible packing with laminar support
$\mathcal L:=\operatorname{supp}(y)=\{S\in\mathcal F:y_S>0\}$.  Its \emph{weighted laminar tree} $\mathcal T_y$ is the inclusion tree
of $\mathcal L$ augmented by $X$ and the missing singletons: each proper
set is joined to its smallest strict superset.  The edge for
$S\in\mathcal L$ has weight $y_S$; added edges have weight zero.
The singleton $\{v\}$ is the leaf labelled $v$, and total weight is
$\sum_{S\in\mathcal F}y_S$.  Whenever a
binary tree is required, we refine every node of outdegree greater than
two by zero-weight binary gadgets and retain the notation $\mathcal T_y$.
Every internal node of the resulting rooted binary tree has exactly two
children; no new labelled leaves are added.  This refinement changes
neither the total weight nor any weighted parity-join value.

For any tree $\mathcal T$ with labelled leaf set $X$, choose an arbitrary
root and, for each edge $e$, let $L_e\subseteq X$ be the labelled leaves
in the component of $\mathcal T-e$ not containing the root.  For every
even $T\subseteq X$, define
$J_{\mathcal T}(T):= \{e\in E(\mathcal T):|L_e\cap T|\equiv1\pmod2\}.$
This is the unique edge set whose odd-degree labelled leaves are exactly
$T$ and whose internal vertices have even degree; in particular, it is
independent of the chosen root.  Rooting $\mathcal T_y$ at its node
representing $X$, the positive edge corresponding to
$S\in\mathcal L$ is present precisely when $S$ is $T$-odd.  For a
weighted tree, write $w(F):=\sum_{e\in F}w(e)$.  Therefore,
\begin{equation}
  \score_y(T)=w(J_{\mathcal T_y}(T)),
\label{eq:tree-score}
\end{equation}

Fix a nonnegatively weighted tree $\mathcal T$ with labelled leaf set
$V$, an even reference set $A\subseteq V$, and a set $U\subseteq V$.
For $r\in\mathcal R_U:=\{0,2,\ldots,2\lfloor |U|/2\rfloor\}$,
define the exact-cardinality profile
$\mathcal G_{\mathcal T,A,U}(r):=
\max_{D\subseteq U,\ |D|=r}w(J_{\mathcal T}(A\triangle D))$.
The role of concavity is to convert a distribution over even sets of
varying cardinalities into an exact-cardinality solution.  Thus, an
expected tree-value guarantee can be preserved at the required cardinality.
\begin{lemma}[Exact-cardinality concavity]
\label{lem:concavity}
Let $\mathcal T$ be a nonnegatively weighted tree with labelled leaf set
$V$, let $A\subseteq V$ be even, and let $U\subseteq V$.  Then, for every
$r\in\mathcal R_U$ such that $r-2,r+2\in\mathcal R_U$,
\[
  2\mathcal G_{\mathcal T,A,U}(r)
  \ge
  \mathcal G_{\mathcal T,A,U}(r-2)
  +
  \mathcal G_{\mathcal T,A,U}(r+2).
\]
Equivalently, the piecewise-linear interpolation of
$\mathcal G_{\mathcal T,A,U}$ between consecutive points of $\mathcal R_U$ is
concave.
\end{lemma}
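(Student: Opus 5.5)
The plan is an exchange argument on the tree, using that parity joins add modulo two. Fix $r$ with $r\pm2\in\mathcal R_U$. Choose maximizers $D_-,D_+\subseteq U$ with $|D_-|=r-2$ and $|D_+|=r+2$. Put $P:=J_{\mathcal T}(A\triangle D_-)$ and $Q:=J_{\mathcal T}(A\triangle D_+)$. The goal is two sets $D_1,D_2\subseteq U$, both of size $r$, with
\[
w(J_{\mathcal T}(A\triangle D_1))+w(J_{\mathcal T}(A\triangle D_2))=w(P)+w(Q).
\]
This gives $2\mathcal G(r)\ge\mathcal G(r-2)+\mathcal G(r+2)$ at once. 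The concrete choice is $D_1:=D_-\cup\{a,b\}$ and $D_2:=D_+\setminus\{a,b\}$ for two suitable vertices $a,b\in D_+\setminus D_-$. Both sets lie in $U$ and have size exactly $r$.

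\textbf{Step 1: additivity.} Membership of an edge $e$ in $J_{\mathcal T}(Z)$ is the parity of $|L_e\cap Z|$, so $J_{\mathcal T}(Z\triangle Z')=J_{\mathcal T}(Z)\triangle J_{\mathcal T}(Z')$. In particular $J_{\mathcal T}(\{a,b\})$ is the tree path $\pi_{ab}$. Hence the two new joins are $P\triangle\pi_{ab}$ and $Q\triangle\pi_{ab}$, and also $P\triangle Q=J_{\mathcal T}(D_-\triangle D_+)$. Every edge of $\pi_{ab}\cap(P\triangle Q)$ lies in exactly one of $P,Q$ before the swap and in exactly one after it. So if $\pi_{ab}\subseteq P\triangle Q$, the total weight is unchanged. (With nonnegative weights a weak inequality would also suffice, but equality is what we obtain.)

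\textbf{Step 2: a path decomposition.} Let $Z:=D_-\triangle D_+=X\cup Y$, where $X:=D_+\setminus D_-$ and $Y:=D_-\setminus D_+$. Then $|X|=|Y|+4>|Y|$. The key lemma is that for any even leaf set $Z$, the join $J_{\mathcal T}(Z)$ is the edge-disjoint union of $|Z|/2$ leaf-to-leaf paths that pair up $Z$.

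The proof is by a bottom-up sweep from the root. At each node, collect the still-unpaired elements of $Z$ coming from its children and pair them arbitrarily, leaving at most one unpaired. That element is passed upward across the parent edge, and this happens exactly when the subtree contains an odd number of $Z$-leaves, i.e.\ exactly when the edge is in $J_{\mathcal T}(Z)$. Each edge therefore carries at most one path, and the paths use precisely the edges of $J_{\mathcal T}(Z)$.

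Since $|X|>|Y|$ and each path pairs two elements of $Z$, some path in this decomposition must join two elements $a,b\in X$. Its edge set satisfies $\pi_{ab}\subseteq J_{\mathcal T}(Z)=P\triangle Q$, which is exactly what Step 1 needs. The concavity inequality then follows.

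The equivalence with concavity of the piecewise-linear interpolation is immediate, because the points of $\mathcal R_U$ are equally spaced.

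The main obstacle is Step 2. The parity-by-edge description of $J_{\mathcal T}$ must be matched carefully with the bottom-up pairing, including zero-weight gadget edges and nodes of high degree. Since the argument never uses weights or binarity, it should go through verbatim for any tree.
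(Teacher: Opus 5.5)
Your proof is correct and follows the paper's exchange argument: take maximizers for $r\pm2$, use $J_{\mathcal T}(A\triangle D_-)\triangle J_{\mathcal T}(A\triangle D_+)=J_{\mathcal T}(D_-\triangle D_+)$, and move two points $a,b\in D_+\setminus D_-$ whose tree path lies in this symmetric difference, so toggling that path preserves the total weight. The only difference is how you locate $a,b$: the paper applies a handshake-parity count to each connected component of $J_{\mathcal T}(D_-\triangle D_+)$ to find one containing two such points, whereas you decompose the join into edge-disjoint leaf-to-leaf paths and apply pigeonhole to $|X|>|Y|$, which works equally well.
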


The proof is given in \cref{sec:structural-proofs}.

\begin{corollary}[Mean-cardinality rounding]
\label{cor:mean-rounding}
Extend $\mathcal G_{\mathcal T,A,U}$ linearly between consecutive points
of $\mathcal R_U$.  If $D\subseteq U$ is a random even set and
$\bar q:=\mathbb E|D|$, then
\[
  \mathcal G_{\mathcal T,A,U}(\bar q)
  \ge \mathbb E\bigl[w(J_{\mathcal T}(A\triangle D))\bigr].
\]
In particular, if $\bar q\in\mathcal R_U$, an exact-$\bar q$ set attains
at least the expected value.
\end{corollary}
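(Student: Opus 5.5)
The plan is to derive the corollary directly from the discrete concavity in \cref{lem:concavity}, using a Jensen-type inequality. Put $g:=\mathcal G_{\mathcal T,A,U}$ and let $\hat g$ be its piecewise-linear interpolation on the interval $[0,2\lfloor|U|/2\rfloor]$. By \cref{lem:concavity}, $\hat g$ is concave on that interval.

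Next, every realization of $D$ is an even subset of $U$, so $|D|\in\mathcal R_U$. Pointwise we therefore have
\[
w(J_{\mathcal T}(A\triangle D))\le g(|D|)=\hat g(|D|),
\]
because $D$ itself is a feasible candidate in the maximum that defines $g(|D|)$. Taking expectations gives $\mathbb E[w(J_{\mathcal T}(A\triangle D))]\le\mathbb E[\hat g(|D|)]$.

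Jensen's inequality for the concave function $\hat g$ then yields $\mathbb E[\hat g(|D|)]\le\hat g(\mathbb E|D|)=\hat g(\bar q)$. The point $\bar q$ is a convex combination of points of $\mathcal R_U$, so it lies in the domain of $\hat g$. Since $D$ takes only finitely many values, the distribution is finitely supported and no integrability issues arise. Chaining the two inequalities gives the claimed bound.

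For the final assertion, suppose $\bar q\in\mathcal R_U$. Then $\hat g(\bar q)=g(\bar q)$ is attained by some $D^\ast\subseteq U$ with $|D^\ast|=\bar q$, and this set achieves at least the expected value. The only point needing any care is that concavity of the interpolation, which is stated in \cref{lem:concavity}, gives Jensen on the full interval. This is immediate, so there is no real obstacle.
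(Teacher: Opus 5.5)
Your proposal is correct and follows the paper's proof: concavity of the interpolated profile from \cref{lem:concavity}, the pointwise bound $w(J_{\mathcal T}(A\triangle D))\le\mathcal G_{\mathcal T,A,U}(|D|)$ from the definition of the profile, and Jensen's inequality. Your remarks on the domain of the interpolation and on finite support only make explicit what the paper's two-line argument leaves implicit.
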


\begin{proof}
By \cref{lem:concavity}, the interpolated profile is concave.  Hence
Jensen's inequality and the definition of the profile give
\[
 \mathcal G_{\mathcal T,A,U}(\bar q)
 \ge \mathbb E[\mathcal G_{\mathcal T,A,U}(|D|)]
 \ge \mathbb E[w(J_{\mathcal T}(A\triangle D))].
\]
\end{proof}

The following tight profile is the quantitative rounding statement used
by the centered all-cut algorithm in \cref{sec:approximation}.

\begin{theorem}[Tight density profile]
\label{thm:density}
Let $\mathcal T$ be a nonnegatively weighted rooted binary tree, every internal node having
exactly two children, with $m$ labelled leaves and total edge weight $W:=\sum_{e\in E(\mathcal T)}w(e)$.  For
every even $q$ with
$2\le q\le m$, an exact-$q$ leaf set $S$ can be found such that, for
$p=q/m$,
\begin{equation}
  w(J_{\mathcal T}(S))\ge
  \min\{p,2(1-p)\}W.
\label{eq:density-profile}
\end{equation}
Both branches of this bound are tight in general.
\end{theorem}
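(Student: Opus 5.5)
The plan is to anchor the profile at the single density $p=2/3$ and let concavity produce both branches. With $A=\varnothing$ and $U=V$ (the $m$ labelled leaves), write $\mathcal G:=\mathcal G_{\mathcal T,\varnothing,V}$ and extend it linearly as in \cref{cor:mean-rounding}. We have $\mathcal G(0)=0$ and $\mathcal G\ge0$ everywhere. The target $\min\{p,2(1-p)\}W$ is exactly the tent function with value $0$ at $p=0$ and $p=1$ and peak $\tfrac23W$ at $p=2/3$. So by \cref{lem:concavity} it suffices to show $\mathcal G(2m/3)\ge\tfrac23 W$ for the interpolated profile, and then compare chords. The set $S$ itself is produced by a tree dynamic program that computes $\mathcal G(q)$ exactly for every even $q$ (a standard knapsack-style merge over the two children, tracking cardinality and parity). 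Hence it suffices to prove that the value exists.

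\textbf{Anchor via a Markov parity process.} I will build a random even leaf set $S$ with $\mathbb E|S|=2m/3$ such that every edge of $\mathcal T$ lies in $J_{\mathcal T}(S)$ with probability exactly $2/3$. Then $\mathbb E[w(J_{\mathcal T}(S))]=\tfrac23W$, and \cref{cor:mean-rounding} gives $\mathcal G(2m/3)\ge\tfrac23W$.

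Assign to every non-root node $x$ a random parity $\sigma_x$, meant to equal $|S\cap L_x|\bmod 2$, and generate these parities top-down.
\begin{itemize}
\item At the root, the two children receive equal parities: both odd with probability $2/3$, both even with probability $1/3$. This makes $|S|$ even.
\item At an internal non-root node with parity odd, the children receive (odd, even) or (even, odd), each with probability $1/2$.
\item At an internal non-root node with parity even, both children receive odd.
\end{itemize}
This transition is consistent with parities, since the parity of a node is the sum of its children's parities. It preserves the marginal: a child is odd with probability $\tfrac23\cdot\tfrac12+\tfrac13\cdot1=\tfrac23$.

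At a leaf, put the leaf in $S$ iff its parity is odd. By induction, $\sigma_x=|S\cap L_x|\bmod2$ for every node $x$. Therefore each edge $e$ lies in $J_{\mathcal T}(S)$ with probability $2/3$. Moreover $\mathbb E|S|=\sum_{\text{leaves}}\Pr[\text{odd}]=2m/3$.

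\textbf{From the anchor to all even $q$.} For $q\le 2m/3$, concavity on the chord from $(0,0)$ to $(2m/3,\tfrac23W)$ gives $\mathcal G(q)\ge (q/m)W=pW$. For $q\ge2m/3$, use the chord to the top point $(q_{\max},\mathcal G(q_{\max}))$ of $\mathcal R_V$.
\begin{itemize}
\item When $m$ is even, $q_{\max}=m$ and $\mathcal G(m)\ge0$ yields $\mathcal G(q)\ge 2(1-p)W$ directly.
\item When $m$ is odd, $q_{\max}=m-1$, and the chord with value $0$ there is too weak near $q=m-1$. This endpoint is the step I expect to be the main obstacle.
\end{itemize}
For odd $m$, I would try to strengthen the top anchor to $\mathcal G(m-1)\ge 2W/m$ and apply the chord from $(2m/3,\tfrac23W)$ to this point. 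Simply omitting a uniformly random leaf is too weak, since a pendant leaf edge is then odd with probability only $1/m$. So I would instead use a modified parity process: run the Markov process with one designated ``defect'' leaf forced even, chosen at random according to the tree shape. Alternatively, I would argue directly that some root-to-leaf path carries weight at least $2W/m$, using that the edges form a binary tree with $m$ leaves. Once such a bound is in hand, concavity on that chord finishes the case.

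\textbf{Tightness.}
\begin{itemize}
\item \emph{The $p$ branch.} Put weight $1$ on every leaf edge and $0$ elsewhere, so $W=m$. The odd edges of $J_{\mathcal T}(S)$ include exactly the $q$ leaf edges of $S$, so the value is at most $q=pW$.
\item \emph{The $2(1-p)$ branch.} Group the leaves into cherries and put weight $1$ only on each cherry's parent edge, so $W=m/2$. A cherry edge is odd iff exactly one of its two leaves is chosen, and at most $m-q$ cherries can be split. Hence the value is at most $m-q=2(1-p)W$.
\end{itemize}
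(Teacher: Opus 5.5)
\textbf{The gap: odd $m$ with $q>2m/3$.} Most of your argument is sound. The anchor process is the paper's construction at $p=2/3$ (with $a=1$), and it correctly gives $\mathcal G(2m/3)\ge\tfrac23W$. Deriving the left branch from the single chord through $(0,0)$ and $(2m/3,\tfrac23W)$ is a valid shortcut; the paper instead builds a separate marginal-$p$ process for each $p\le 2/3$. The even-$m$ right branch and both tightness examples are also fine (the cherry example needs $m$ even). The gap is the case $m$ odd and $q>2m/3$, for example $m=5$, $q=4$. There the chord to $(m-1,\mathcal G(m-1))$, using only $\mathcal G(m-1)\ge0$, gives $0$ at $q=m-1$. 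So the bound $\mathcal G(m-1)\ge 2W/m$ is indispensable, and you do not prove it.

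Neither fallback works as stated. A root-to-leaf path of weight $2W/m$ need not exist: in the binary-refined star with unit leaf edges every such path weighs $1$, while $2W/m=2$. The ``defect leaf'' Markov process has mean cardinality near $2m/3$, so it cannot bound $\mathcal G$ at the endpoint $m-1$. Concavity gives no upward extrapolation, so you need a random even set of mean exactly $m-1$. Such a set always has size $m-1$, so it is a distribution over single omissions $V\setminus\{v\}$.

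\textbf{The idea you discarded is the one that works.} Your reason for discarding it is a parity miscount. When $v$ is omitted, the pendant edge of a leaf $u$ has $|L_e\cap(V\setminus\{v\})|=1$ for every $v\ne u$. So it lies in the join with probability $(m-1)/m$, not $1/m$. The general argument, which is the paper's, runs as follows.
\begin{enumerate}
\item For odd $m$, every tree edge $e$ splits the leaves into two nonempty sides, since both children of every internal node have leaves below them.
\item Exactly one side, call it $B_e$, is even, so $|B_e|\ge2$.
\item A parity check shows $e\in J_{\mathcal T}(V\setminus\{v\})$ exactly when $v\in B_e$.
\item Averaging over a uniform $v$ gives $\mathcal G(m-1)\ge\frac1m\sum_e|B_e|\,w(e)\ge 2W/m$.
\item The chord from $(2m/3,\tfrac23W)$ to $(m-1,2W/m)$ lies exactly on the line $2(1-q/m)W$, which completes the right branch.
\end{enumerate}
The case $m=3$, $q=2$ is already covered by your anchor.
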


The proof is given in \cref{sec:structural-proofs}.

\begin{lemma}[Exact-cardinality tree dynamic program]
\label{lem:tree-dp}
Let $\mathcal T$ be a binary weighted tree with labelled leaf set $V$,
let $A\subseteq V$ be even, let $U\subseteq V$, and let
$q\in\mathcal R_U$.  A set
\[
 D^*\in\arg\max_{\substack{D\subseteq U\\|D|=q}}
      w(J_{\mathcal T}(A\triangle D))
\]
can be computed in polynomial time.
\end{lemma}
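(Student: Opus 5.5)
We will use a standard bottom-up dynamic program over the rooted binary tree $\mathcal T$. The key observation is that membership of an edge in $J_{\mathcal T}(A\triangle D)$ depends only on the parity of $|L_e\cap(A\triangle D)|$, and this parity is $|L_e\cap A|+|L_e\cap D|\pmod 2$. For a fixed $A$, the quantity $a_e:=|L_e\cap A|\bmod 2$ is known in advance for every edge. So the contribution of edge $e$ depends only on $|L_e\cap D|\bmod 2$, and in particular on the number $j=|L_e\cap D|$ of selected leaves below $e$. This suggests a table indexed by a node and a count.

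Root $\mathcal T$ at an arbitrary internal node; if $\mathcal T$ is a single edge or trivial, the claim is immediate. For every node $x$, let $L_x$ denote the labelled leaves in its subtree. For $0\le j\le |L_x\cap U|$, let $F_x(j)$ be the maximum, over $D_x\subseteq L_x\cap U$ with $|D_x|=j$, of the total weight of those edges $e$ inside the subtree of $x$ for which $a_e+|L_e\cap D_x|$ is odd. Set $F_x(j)=-\infty$ when no such $D_x$ exists.

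The tables are filled as follows.
\begin{itemize}
\item At a labelled leaf $v$, put $F_v(0)=0$, and also $F_v(1)=0$ if $v\in U$.
\item At an internal node $x$ with children $c_1,c_2$ and connecting edges $e_1,e_2$, set
\[
F_x(j)=\max_{j_1+j_2=j}\sum_{i=1}^2\Bigl(F_{c_i}(j_i)+w(e_i)\,\1[a_{e_i}+j_i\equiv1\pmod2]\Bigr).
\]
If the root has degree other than two, which can happen after rooting at an arbitrary node, we combine the children sequentially with the same max-plus convolution.
\end{itemize}

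Correctness follows by induction on the subtree height. The edge sets in the subtrees of $c_1,c_2$ together with $e_1,e_2$ partition the edges of the subtree of $x$. The objective contributed by each part depends only on $D_x\cap L_{c_i}$, and for $e_i$ only on $j_i=|D_x\cap L_{c_i}|$, because $L_{e_i}=L_{c_i}$. So the optimum decomposes exactly as in the recurrence. The answer is $F_{\text{root}}(q)$, which is finite because $q\le|U|$. We recover $D^*$ by storing the argmax split $(j_1,j_2)$ at each node and tracing back.

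For the running time, each table has at most $|V|+1$ entries, and each convolution costs $O(|V|^2)$ additions of input-sized rationals. The tree has $O(|V|)$ nodes once zero-weight unlabelled leaves are pruned. Unlabelled leaves and internal chains are harmless: an unlabelled leaf has $F(0)=0$ only, and degree-two internal nodes just pass a shifted table up. This gives polynomial time overall.

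The only real subtlety is the claim that the objective is edge-separable given the counts, that is, that $J_{\mathcal T}$ is determined edgewise by $|L_e\cap(A\triangle D)|\bmod 2$. This holds by the definition of $J_{\mathcal T}$ together with $|L_e\cap(A\triangle D)|\equiv|L_e\cap A|+|L_e\cap D|\pmod 2$, since $D\cap A$ is counted twice.
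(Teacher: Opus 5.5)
Your proposal is correct and is essentially the paper's proof: the same bottom-up max-plus dynamic program indexed by a node and the number $j$ of selected leaves below it, using that an edge's membership in the join is determined by the parity of $|L_e\cap A|+j$, with traceback to recover $D^*$. The differences are cosmetic. You root at an arbitrary internal node, which forces your degree-three root and pass-through handling, whereas the paper simply uses the given binary root; and you index tables up to $|L_x\cap U|$ rather than capping them at $q$.
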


The recurrence, its initialization, and the proof appear in
Appendix~\ref{app:tree-dp-details}.

For nonempty $X\subseteq V$ and $1\le \ell\le |X|$, denote the
farthest-first traversal by $\FF(X,\ell)$~\cite{Gonzalez85}.  Starting
from an arbitrary point of $X$,
the procedure repeatedly adds a point whose distance from the set of
points chosen so far is maximum, until $\ell$ points have been selected.
Write $C:=\FF(X,\ell)=\{c_1,\ldots,c_\ell\}$.
The initial point and all ties are fixed by the input order; only
points not already selected are eligible.

\begin{lemma}[Farthest-first separation]
\label{lem:ff-separation}
If $C=\FF(X,\ell)$ and $R=\max_{v\in X}d(v,C)$, then
$d(c_i,c_j)\ge R$ for all distinct $c_i,c_j\in C$.
\end{lemma}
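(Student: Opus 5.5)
The plan is to use the standard Gonzalez argument: the quantity $R$ is a lower bound on every insertion distance made by the traversal. Write $C_i:=\{c_1,\ldots,c_i\}$ for the prefix after $i$ selections. For $i\ge2$, the point $c_i$ was chosen to maximize $d(\cdot,C_{i-1})$ over unselected points, so set $r_i:=d(c_i,C_{i-1})$.

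The first step shows that the insertion distances cannot increase, that is, $r_{i+1}\le r_i$. For each $v\in X$ the function $i\mapsto d(v,C_i)$ is nonincreasing, since prefixes only grow. Then
\[
r_{i+1}=d(c_{i+1},C_i)\le d(c_{i+1},C_{i-1})\le r_i .
\]
The second inequality holds because $c_{i+1}$ was also unselected at step $i$, and $c_i$ maximized the distance among unselected points. Already-selected points have distance $0$, so restricting to unselected points changes nothing whenever the maximum is positive.

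The second step compares $R$ with the last insertion distance. Take any $v\in X$. If $v\in C$, then $d(v,C)=0$. Otherwise $v$ was never chosen, and at the last step ($\ell\ge2$) we get
\[
d(v,C)\le d(v,C_{\ell-1})\le r_\ell .
\]
Hence $R\le r_\ell$.

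The third step combines the two. For distinct $c_i,c_j$ with $i<j$,
\[
d(c_i,c_j)\ge d(c_j,C_{j-1})=r_j\ge r_\ell\ge R .
\]
For $\ell=1$ there is no distinct pair, so the claim is vacuous.

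The only delicate point is the degenerate case in which the maximum distance to the current prefix is $0$, so that ties are broken among unselected points. In that case all points lie at distance $0$ from the prefix, so $r_i=0$ from then on and $R=0$, and the inequality holds trivially. Metric distances are nonnegative throughout. I expect no real obstacle; the monotonicity step is the heart of the argument.
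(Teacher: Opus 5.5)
Your proof is correct and is essentially the paper's argument: both rest on the insertion distance $d(c_j,C_{j-1})$ being at least $R$. The paper gets this directly at every step $j$, since any never-selected $v$ was eligible when $c_j$ was chosen and so $d(v,C)\le d(v,C_{j-1})\le d(c_j,C_{j-1})$; your monotonicity step $r_{i+1}\le r_i$ is therefore valid but not needed.
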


\begin{proof}
Assume $i<j$.  When $c_j$ is selected, its distance from
$\{c_1,\ldots,c_{j-1}\}$ is at least the final covering radius $R$.
In particular, $d(c_i,c_j)\ge R$.
\end{proof}

\section{Approximation Algorithms for Prescribed Cardinality}
\label{sec:approximation}

Throughout this section, $k$ is even and $0\le k\le n$.  All cut
families, induced metrics, parity scores, laminar trees, and
farthest-first sets use the notation fixed in \cref{sec:framework}.

The endpoint $k=0$, and also $k=n$ when feasible, has a unique solution.
Assume henceforth that $2\le k<n$.  The following algorithm has a center-set
size parameter $m\in\{k,\ldots,n\}$; its two choices used below are
$m=n$ and $m=3k/2$.

\par\medskip
\noindent\fbox{%
\begin{minipage}{\dimexpr\linewidth-2\fboxsep-2\fboxrule\relax}
\refstepcounter{algorithm}
\noindent\textbf{Algorithm~\thealgorithm: Centered all-cut algorithm}
\label{alg:centered}\par
\medskip
\begin{itemize}
\renewcommand{\labelitemi}{--}
\setlength{\topsep}{0pt}\setlength{\partopsep}{0pt}
\item Set $C:=\FF(V,m)$.

\item Compute an optimal laminar all-cut packing
$y^C$ on $(C,d_C)$.

\item Construct its weighted laminar tree $\mathcal T_C:=\mathcal T_{y^C}$.

\item Using the exact-cardinality tree dynamic program of
\cref{lem:tree-dp}, compute and return
\begin{equation}
 T_m\in
 \arg\max_{\substack{T\subseteq C\\|T|=k}}
 w\bigl(J_{\mathcal T_C}(T)\bigr).
\label{eq:centered-algorithm}
\end{equation}
\end{itemize}
\end{minipage}}
\par\medskip
By \cref{lem:laminarization,lem:tree-dp}, the algorithm is deterministic
and runs in polynomial time.

\begin{theorem}[Centered all-cut guarantee]
\label{thm:centered}
Put $\theta:=k/m$ and $h(\theta):=\min\{\theta,2(1-\theta)\}$.
Let $R:=\max_{v\in V}d(v,C)$.
The set returned by \eqref{eq:centered-algorithm} satisfies
\[
 \mu(T_m)\ge
 \max\left\{
   \frac{kR}{2},
   h(\theta)(\OPT_k-kR)
 \right\}.
\]
\end{theorem}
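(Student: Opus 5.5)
We establish the two lower bounds separately; the maximum then follows at once.

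\emph{First bound, $\mu(T_m)\ge kR/2$.} Any $k$-subset $T\subseteq C$ satisfies $\mu(T)\ge (k/2)R$. Indeed, by \cref{lem:ff-separation}, every two distinct centers are at distance at least $R$, so each of the $k/2$ edges of a perfect matching on $T$ costs at least $R$. The returned set $T_m$ is a $k$-subset of $C$, since $m\ge k$ makes the maximization in \eqref{eq:centered-algorithm} feasible. Hence $\mu(T_m)\ge kR/2$ holds no matter what the dynamic program selects.

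\emph{Second bound.} We use the chain
\[
\mu(T_m)\ge \score_{y^C}(T_m)=w(J_{\mathcal T_C}(T_m))\ge h(\theta)\,W(C,\mathcal A(C)).
\]
The first inequality is \cref{lem:certificate} applied in $(C,d_C)$; the matching cost is the same in $d_C$ and $d$. The equality is \eqref{eq:tree-score}. For the last inequality, \cref{thm:density} applies to the binary refinement of $\mathcal T_C$, which has $m$ labelled leaves and total weight $W(C,\mathcal A(C))$ because $y^C$ is optimal. Taking $q=k$, it gives some $k$-set with value at least $h(k/m)W$, and the dynamic program maximizes over all $k$-subsets of $C$, so $T_m$ does at least as well. (When $m=k$, $\theta=1$ and $h=0$, so this case is trivial.)

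It remains to show $W(C,\mathcal A(C))\ge \OPT_k-kR$. Let $T^*$ be optimal with $|T^*|=k$, and let $\pi:T^*\to C$ map each vertex to a nearest center, so $d(v,\pi(v))\le R$. Let $E\subseteq C$ be the set of centers with an odd number of preimages; by \cref{lem:projection}, $E$ is even and $\mu(T^*\triangle E)\le kR$. Since $T^*=(T^*\triangle E)\triangle E$, inequality \eqref{eq:matching-triangle} gives
\[
\OPT_k=\mu(T^*)\le \mu(E)+kR.
\]
Because $E\subseteq C$ is even, \cref{lem:certificate} in $(C,d_C)$ gives $\mu(E)\le W(C,\mathcal A(C))$. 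Therefore $W\ge \OPT_k-kR$. If $\OPT_k-kR<0$, the second bound is trivial anyway.

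\emph{Main point of care.} The delicate step is making \cref{thm:density} apply. It requires a binary tree with exactly $m$ labelled leaves and even $q$ with $2\le q\le m$. The zero-weight binarization keeps the weight and all parity-join values unchanged, and $k$ is even with $2\le k\le m$, so the conditions hold. Combining the two bounds gives the stated maximum.
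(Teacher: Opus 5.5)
Your proof is correct and follows the paper's own argument. You project an optimal $k$-set onto the centers and combine the matching triangle inequality with the packing certificate to get $W(C,\mathcal A(C))\ge\OPT_k-kR$; the density theorem together with the dynamic program then gives the $h(\theta)$ bound, and farthest-first separation gives $kR/2$. Your extra checks, which the paper leaves implicit, are sound: the binarization keeps the $m$ labelled leaves and total weight, the cases $m=k$ and $\OPT_k<kR$ are trivial, and $\mu$ is the same under $d_C$ and $d$.
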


\begin{proof}
Let $O$ be an optimal $k$-set, map every point of $O$ to a nearest
center, and let $E\subseteq C$ contain the centers with odd preimage
multiplicity.  By \cref{lem:projection}, $E$ is even and
$\mu(O\triangle E)\le kR$.  The matching triangle inequality and the
all-cut packing certificate therefore give
\begin{equation}
 W_C:=W(C,\mathcal A(C))\ge\mu(E)\ge\OPT_k-kR.
 \label{eq:centered-projection}
\end{equation}

The tree $\mathcal T_C$ has $m$ labelled leaves and total edge weight
$W_C$.  By \cref{thm:density}, the dynamic program, \eqref{eq:tree-score},
and the packing certificate,
\[
 \mu(T_m)\ge h(\theta)W_C
             \ge h(\theta)(\OPT_k-kR).
\]
On the other hand, \cref{lem:ff-separation} makes the centers pairwise
$R$-separated.  Every perfect matching on the $k$ centers of $T_m$ has
$k/2$ edges, and hence $\mu(T_m)\ge kR/2$.
\end{proof}

\begin{corollary}[Density-dependent all-cut guarantee]
\label{thm:full}
For $0<k<n$, choosing $m=n$ gives value fraction
$\min\{p,2(1-p)\}$, where $p=k/n$, and its reciprocal as approximation
factor.  The feasible endpoints $k=0,n$ are exact.
\end{corollary}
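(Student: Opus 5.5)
When $m=n$ we have $C=V$, so $R=\max_{v}d(v,V)=0$ and $\theta=k/n=p$. The plan is to specialize \cref{thm:centered} to this choice.

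\textbf{Step 1: the bound.} Since $\FF(V,n)$ selects every point exactly once (only unselected points are eligible), $C=V$ and every $v$ is its own nearest center, giving $R=0$. The second term of \cref{thm:centered} then becomes $\mu(T_n)\ge h(p)\OPT_k=\min\{p,2(1-p)\}\OPT_k$. Equivalently, one can see this directly. The projection step is vacuous, with $E=O$. By the certificate lemma, $W(V,\mathcal A(V))\ge\OPT_k$. The tree has $n$ leaves, and \cref{thm:density} applies with $q=k$ because $2\le k\le n$ and $k$ is even. The dynamic program maximizes over exact-$k$ sets, so its output is at least the density-profile bound. Finally, \eqref{eq:tree-score} together with $\score_y\le\mu$ transfers this bound to $\mu(T_n)$.

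\textbf{Step 2: checking hypotheses.} I would check two points.
\begin{itemize}
\item The laminar tree must be made binary with exactly $n$ labelled leaves. This holds by the refinement convention stated before \eqref{eq:tree-score}.
\item The range $0<k<n$ with $k$ even forces $k\ge2$, which is what \cref{thm:density} needs.
\end{itemize}
Since $\min\{p,2(1-p)\}>0$ on this range, the factor $1/\min\{p,2(1-p)\}$ is finite.

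\textbf{Step 3: endpoints.} For $k=0$ the only candidate is $T=\varnothing$. For $k=n$ (when $n$ is even) the only candidate is $T=V$. In both cases the unique solution is returned, so the result is exact.

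I expect no real obstacle. The only subtlety is confirming that $\FF(V,n)=V$ and hence $R=0$, which follows because farthest-first only adds unselected points.
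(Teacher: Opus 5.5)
Your proposal is correct and matches the paper's proof, which simply observes that $m=n$ gives $C=V$, $R=0$ and $\theta=p$, and then invokes \cref{thm:centered}. Your additional checks ($\FF(V,n)=V$, $E=O$, $k\ge2$, the binary refinement, and the endpoint cases) are accurate and just spell out details that the paper leaves implicit.
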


\begin{proof}
For $m=n$ we have $C=V$, $R=0$, and $\theta=p$, so the claim follows
from \cref{thm:centered}.
\end{proof}

\begin{corollary}[Sparse factor-$7/2$ algorithm]
\label{thm:sparse}
If $2\le k\le2n/3$, choosing $m=3k/2$ gives a factor-$7/2$
approximation.
\end{corollary}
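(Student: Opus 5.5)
With $m=3k/2$ we get $\theta=k/m=2/3$, and therefore $h(\theta)=\min\{2/3,\,2/3\}=2/3$. I will apply \cref{thm:centered} directly and then balance its two lower bounds.

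\textbf{Step 1: admissibility of $m$.} The centered algorithm needs $m$ to be an integer with $k\le m\le n$. Since $k$ is even, $3k/2$ is an integer. It satisfies $3k/2\ge k$ trivially, and $3k/2\le n$ holds exactly when $k\le 2n/3$. So $m=3k/2$ is a legal parameter choice, and $\FF(V,m)$ is well defined.

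\textbf{Step 2: the two lower bounds.} Write $x:=kR$. \cref{thm:centered} gives
\[
 \mu(T_m)\ge\max\Bigl\{\tfrac{x}{2},\ \tfrac23(\OPT_k-x)\Bigr\}.
\]
The first term increases in $x$ and the second decreases in $x$. Hence the worst case over $x\ge0$ occurs where they are equal:
\[
 \tfrac{x}{2}=\tfrac23(\OPT_k-x)
 \quad\Longleftrightarrow\quad
 x=\tfrac47\OPT_k,
\]
where both terms equal $\tfrac27\OPT_k$. The same conclusion follows from a convex combination: the weights $4/7$ on $x/2$ and $3/7$ on $\tfrac23(\OPT_k-x)$ give $\tfrac27x+\tfrac27\OPT_k-\tfrac27x=\tfrac27\OPT_k$. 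A maximum is at least any convex combination of its arguments, so
\[
 \mu(T_m)\ge\tfrac27\OPT_k,
\]
which is a factor-$7/2$ approximation. Polynomial running time is already covered by \cref{lem:laminarization,lem:tree-dp}.

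\textbf{Main obstacle.} The only point that needs care is Step 1: checking that $m=3k/2$ is an integer in $[k,n]$. This is where the hypothesis $k\le 2n/3$ enters. Beyond that, the result is an immediate balancing computation on the bound of \cref{thm:centered}.
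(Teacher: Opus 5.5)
Your proposal is correct and matches the paper's proof: with $\theta=2/3$ you take the $4/7$--$3/7$ convex combination of the two bounds of \cref{thm:centered} to get $\mu(T_m)\ge\frac27\OPT_k$. Your explicit check that $m=3k/2$ is an integer in $[k,n]$ is a detail the paper leaves implicit.
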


\begin{proof}
Here $\theta=2/3$.  Take the convex combination with weights $4/7$ and
$3/7$ of the two bounds in \cref{thm:centered}:
\[
 \mu(T_m)
 \ge \frac47 \times \frac{kR}{2}
      +\frac37 \times \frac23(\OPT_k-kR)
 =\frac27\OPT_k.
\]
\end{proof}

Let $A\subseteq V$ be even, let $U\subseteq V$, and let $s>0$ be even
with $2s\le |U|$.  Define $\Psi(A,U,s):=\max_{D\subseteq U,\ |D|=s}\mu(A\triangle D)$.
Because both $A$ and every feasible $D$ are even, $A\triangle D$ is
even and the objective is well defined.  Set
$C:=\FF(U,2s).$

Define the family of \emph{allowed cuts} by
\begin{equation}
  \mathcal F_{A,C}:=
  \bigl\{S\in\mathcal A(V):
    0<|S\cap C|<|C|
    \text{ or }|S\cap A|\equiv1\pmod2
  \bigr\}.
    \label{eq:allowed}
\end{equation}
Thus an allowed cut either splits $C$ or is $A$-odd.
Evenness of $A$ makes allowedness invariant under complementation;
it is independent of the chosen representative shore.

\begin{lemma}[Allowed-cut family]
\label{lem:allowed-family}
The family $\mathcal F_{A,C}$ is uncrossable.  Moreover, a
minimum-capacity member of $\mathcal F_{A,C}$ can be found in polynomial
time for arbitrary nonnegative edge capacities.
\end{lemma}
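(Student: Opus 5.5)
Two things must be shown: that $\mathcal F_{A,C}$ is uncrossable, and that a minimum-capacity member can be found in polynomial time.

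\emph{Uncrossability.} Write membership as a predicate on an arbitrary shore $S\subsetneq X$: $S$ is allowed iff $S$ splits $C$ or $|S\cap A|$ is odd. Both conditions are invariant under complementation (evenness of $A$), so we may ignore the root orientation. Let $S,T$ be crossing allowed shores. The plan is a case analysis on why each is allowed, using two parity identities modulo $2$:
\[
|(S\cap T)\cap A|+|(S\cup T)\cap A|\equiv |S\cap A|+|T\cap A|
\]
and
\[
|(S\setminus T)\cap A|+|(T\setminus S)\cap A|\equiv |S\cap A|+|T\cap A|.
\]

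\emph{Case 1: both $S$ and $T$ are $A$-odd.} The sum of their $A$-parities is even. Then $|(S\cap T)\cap A|$ and $|(S\setminus T)\cap A|$ have opposite parities, since they sum to $|S\cap A|$, which is odd. If $S\cap T$ is $A$-odd, then so is $S\cup T$, and the pair $S\cap T,S\cup T$ is allowed. Otherwise $S\setminus T$ is $A$-odd, hence so is $T\setminus S$, and the pair $S\setminus T,T\setminus S$ is allowed.

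\emph{Case 2: exactly one of them is $A$-odd, say $S$, and $T$ splits $C$.} The combined parity is odd, so in each of the two pairs exactly one member is $A$-odd. It therefore suffices that the other member of some pair splits $C$.

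\emph{Case 3: both split $C$.} This is the standard fact that "splitting a fixed nonempty set $C$" is crossing-closed. If $S\cap T\cap C\neq\varnothing$ and $C\not\subseteq S\cup T$, then $S\cap T$ and $S\cup T$ both split $C$. Otherwise either $C\cap S\cap T=\varnothing$, in which case $S\setminus T$ and $T\setminus S$ both meet $C$ (since $S,T$ split $C$) and do not contain $C$, so both split it; or $C\subseteq S\cup T$, which is handled the same way by complementation.

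I expect the main obstacle to be Case 2. The parity argument alone does not decide which pair works, so the splitting condition must be tracked carefully. I would sub-case on the configuration of $C$ relative to $T$ and $S$, using the Case 3 trichotomy for $C$ versus $S\cap T$ and $S\cup T$. The aim is to choose the pair in which the $A$-even member splits $C$. If $C$ is not split by $S$, then $C$ lies inside $S$ or inside its complement, and one pair can be checked directly: for instance, if $C\subseteq S$, then $S\cap T\supseteq C\cap T\neq\varnothing$ and $S\cap T\not\supseteq C$, so $S\cap T$ splits $C$, and likewise $S\setminus T$ splits $C$. Here both sets from each pair that lie inside $S$ split $C$, so either pair works. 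If $S$ also splits $C$, Case 3 supplies a pair in which both members split $C$.

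\emph{Oracle.} Given capacities $c$, the minimum over allowed cuts is the minimum of two quantities. The first is the minimum cut separating some pair of vertices of $C$: fix $c_1\in C$ and take the minimum $c_1$--$c_j$ cut over all $j$, which needs $|C|-1$ max-flow computations. The second is the minimum $A$-odd cut, which is a minimum $T$-odd cut computed by the Padberg--Rao algorithm (Gomory--Hu tree) with $T=A$. Every allowed cut falls into one of these two classes and conversely, so the smaller value is correct, and the representing shore is then oriented away from the root.
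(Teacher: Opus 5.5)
Cases~1 and~3 and the oracle part are fine. Fixing $c_1$ and computing $|C|-1$ minimum $c_1$--$c_j$ cuts is a harmless refinement of the paper's minimum over all pairs of $C$. Case~2, however, has a genuine gap in the subcase $C\subseteq S$.

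You correctly show that the members lying inside $S$, namely $S\cap T$ and $S\setminus T$, split $C$. You then conclude that ``either pair works,'' which does not follow and is false in general. By your own reduction, you need the $A$-\emph{even} member of some pair to split $C$. The other members, $S\cup T\supseteq C$ and $T\setminus S\subseteq V\setminus S$, never split $C$, so they are allowed only if they are $A$-odd.

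Suppose $|S\cap T\cap A|$ is odd. Then $|(S\cup T)\cap A|\equiv|S\cap A|+|T\cap A|-|S\cap T\cap A|\equiv 1+0-1\equiv 0$. So $S\cup T$ is forbidden and the pair $S\cap T,\,S\cup T$ fails. For instance, on $V=\{c_1,\dots,c_4,w,z\}$ take $C=S=\{c_1,\dots,c_4\}$ and $A=T=\{c_1,w\}$.

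The missing step is choosing the pair by parity:
\begin{itemize}
\item[] In the subcase $C\subseteq S$: $S\cup T$ is the disjoint union of $S$ and $T\setminus S$, and $S$ is $A$-odd. Hence exactly one of $S\cup T$ and $T\setminus S$ is $A$-odd, and the pair containing it is allowed.
\item[] In the subcase $C\cap S=\varnothing$ the roles swap. Now $S\cup T$ and $T\setminus S$ split $C$, while $S\cap T$ and $S\setminus T$ partition the $A$-odd set $S$. Exactly one of them is $A$-odd, and it selects the pair.
\end{itemize}
With this repair your case analysis is complete.

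The paper sidesteps this bookkeeping entirely. It calls a shore forbidden if it neither splits $C$ nor is $A$-odd. Forbidden shores are closed under complementation and under disjoint union, since two disjoint shores cannot both contain $C$. It then argues by contradiction. If neither pair were allowed, one of $S\cap T$, $V\setminus(S\cup T)$ would be forbidden, and so would one of $S\setminus T$, $T\setminus S$. Their disjoint union is one of $S$, $T$, $V\setminus T$, $V\setminus S$, contradicting allowedness. That single closure argument covers all three of your cases at once, including the mixed case where you have to decide which pair works.
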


\begin{proof}
Call a shore \emph{forbidden} if it does not split $C$ and is $A$-even.
Forbiddenness is complement-invariant and preserved under disjoint
union: two disjoint forbidden shores cannot both contain $C$, and their
union neither splits $C$ nor has odd intersection with $A$.

Let $S$ and $T$ be crossing allowed shores, and write
\[
P=S\cap T,\qquad Q=S\setminus T,\qquad
R=T\setminus S,\qquad Z=V\setminus(S\cup T).
\]
If neither $(P,S\cup T)$ nor $(Q,R)$ consists of two allowed cuts,
complement invariance makes one of $P,Z$ and one of $Q,R$ forbidden.
They are disjoint, and their union is one of
\[
P\cup Q=S,\quad P\cup R=T,\quad
Z\cup Q=V\setminus T,\quad Z\cup R=V\setminus S.
\]
Closure of forbidden shores under disjoint union then contradicts the
allowedness of $S$ and $T$.  Thus $\mathcal F_{A,C}$ is uncrossable.

A cut splits $C$ iff it separates some $a,b\in C$.  Hence a minimum
such cut is the cheapest minimum $a$--$b$ cut over pairs in $C$.
Comparing it with a minimum $A$-odd cut from Padberg--Rao~\cite{PR82}
(or $+\infty$ if none exists) yields a minimum allowed cut.
\end{proof}

\refstepcounter{algorithm}
\par\medskip
\noindent\fbox{%
\begin{minipage}{\dimexpr\linewidth-2\fboxsep-2\fboxrule\relax}
\noindent\textbf{Algorithm~\thealgorithm: Normalized allowed-cut algorithm}\label{alg:quarter}\par
\smallskip
\noindent Input: even $A\subseteq V$, $U\subseteq V$, and even $s>0$ with $2s\le|U|$;\\
\phantom{Input: }set $C:=\FF(U,2s)$.
\medskip
\begin{itemize}
\renewcommand{\labelitemi}{--}
\item By \cref{lem:allowed-family,lem:laminarization}, compute an optimal
packing $y$ over $\mathcal F_{A,C}$ with laminar positive support.

\item Put $W:=\sum_{S\in\mathcal F_{A,C}}y_S=W(V,\mathcal F_{A,C})$.

\item Construct its weighted laminar tree $\mathcal T_y$.

\item Using \cref{lem:tree-dp}, compute {
\begin{equation}
    D_{\mathrm q}
    \in
    \arg\max_{\substack{D\subseteq U\\ |D|=s}}
    w\!\left(J_{\mathcal T_y}(A\triangle D)\right),
    \label{eq:quarter-dp}
\end{equation}}
\item Return $T_{\mathrm q}:=A\triangle D_{\mathrm q}$.
\end{itemize}
\end{minipage}%
}
\par\medskip
The returned set is even.  The algorithm is deterministic and runs in
polynomial time.

\begin{theorem}[Normalized quarter guarantee]
\label{thm:quarter-normalized}
The set returned by Algorithm~\ref{alg:quarter} satisfies
\[
    \mu(T_{\mathrm q})
    \ge \frac14\Psi(A,U,s).
\]
\end{theorem}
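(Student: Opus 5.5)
The plan is to sandwich both $\Psi(A,U,s)$ and $\mu(T_{\mathrm q})$ against the LP value $W=W(V,\mathcal F_{A,C})$, using the covering radius $R:=\max_{u\in U}d(u,C)$ of $C=\FF(U,2s)$ as an intermediate quantity. Three inequalities are targeted:
\[
\Psi(A,U,s)\le W+sR,\qquad W\ge sR,\qquad \mu(T_{\mathrm q})\ge \tfrac12 W .
\]
Together they give $\Psi\le 2W\le 4\mu(T_{\mathrm q})$.

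\emph{Upper bound on $\Psi$.} Let $D^*\subseteq U$ with $|D^*|=s$ attain $\Psi$. Map each point of $D^*$ to a nearest center and let $E\subseteq C$ be the centers of odd preimage multiplicity. By \cref{lem:projection}, $E$ is even and $\mu(D^*\triangle E)\le sR$. Then \eqref{eq:matching-triangle} gives $\mu(A\triangle D^*)\le \mu(A\triangle E)+sR$. Every $(A\triangle E)$-odd cut is either $A$-odd or $E$-odd. An $E$-odd shore $S$ meets $C$, and since $E$ is even, $C\setminus S$ also meets $E$; hence $S$ splits $C$. So $\mathcal F_{A,C}$ contains every $(A\triangle E)$-odd cut, and \cref{lem:certificate} yields $\mu(A\triangle E)\le W$.

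\emph{Lower bound $W\ge sR$.} By \cref{lem:ff-separation}, the $2s$ centers are pairwise at distance at least $R$. I would build a moat packing: for each $c\in C$ and each $t\in[0,R/2)$, take the shore $B(c,t)=\{v:d(v,c)\le t\}$ with density $dt$ (discretized at the finitely many breakpoints $d(v,c)$, so it is a genuine finite packing). Each such shore contains exactly one center and $|C|=2s\ge 2$, so it splits $C$ and is allowed; for shores containing the root, use the complement representative, which is also allowed. Balls of radius $<R/2$ around distinct centers are disjoint. Hence an edge $uv$ is crossed by moats of $c$ only for $t$ between $d(u,c)$ and $d(v,c)$ below $R/2$, a total of at most $\min\{|d(u,c)-d(v,c)|,R/2\}$. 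At most two centers contribute, and I will check the standard case analysis:
\begin{itemize}
\item if only one center contributes, its load is at most $|d(u,c)-d(v,c)|\le d(u,v)$;
\item if centers $c$ and $c'$ both contribute, with $u$ near $c$ and $v$ near $c'$, the load is at most $(R/2-d(u,c))+(R/2-d(v,c'))\le d(u,v)$, by $d(c,c')\ge R$ and the triangle inequality.
\end{itemize}
The total packed weight is $2s\cdot R/2=sR$. This feasibility check is the main technical step, though it is routine.

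\emph{Lower bound $\mu(T_{\mathrm q})\ge W/2$.} Let $D$ be uniform over the even subsets of $C$. For a shore $S$ splitting $C$, the parity of $|S\cap D|$ is uniform, so $S$ is $(A\triangle D)$-odd with probability exactly $1/2$. For an allowed shore not splitting $C$, $S$ is $A$-odd and $|S\cap D|\in\{0,|D|\}$ is even, so $S$ is always odd. Thus $\mathbb E[w(J_{\mathcal T_y}(A\triangle D))]\ge W/2$ by \eqref{eq:tree-score}. Since $\mathbb E|D|=|C|/2=s\in\mathcal R_U$ and $D\subseteq C\subseteq U$, \cref{cor:mean-rounding} gives an exact-$s$ set $D\subseteq U$ of tree value at least $W/2$. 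The dynamic program \eqref{eq:quarter-dp} returns a set at least this good, so \cref{lem:certificate} gives $\mu(T_{\mathrm q})\ge\score_y(T_{\mathrm q})\ge W/2$. Combining the three bounds gives $\Psi\le W+sR\le 2W\le 4\mu(T_{\mathrm q})$.
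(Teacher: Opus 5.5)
Your proof is correct and follows the paper's three-inequality sandwich. The bound $\Psi(A,U,s)\le W+sR$ comes from projection to centers plus the certificate, since every $(A\triangle E)$-odd cut is allowed. The bound $\mu(T_{\mathrm q})\ge W/2$ comes from a uniform even $D\subseteq C$ and mean-cardinality rounding. The bound $W\ge sR$ closes the chain.

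The one genuine difference is how you prove $W\ge sR$. The paper does it in one line. Since $|C|=2s$ is even, every $C$-odd cut splits $C$ and is therefore allowed. The packing certificate with $T=C$ then gives $W\ge\mu(C)$. Pairwise $R$-separation of the centers gives $\mu(C)\ge sR$, because a perfect matching on $C$ has $s$ edges.

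You instead build an explicit primal packing of moats $B(c,t)$, $0\le t<R/2$, around the $2s$ centers. Your disjointness argument and your two-center load bound $R-d(u,c)-d(v,c')\le d(u,v)$ are correct, so this route works. It gives a self-contained, duality-free witness that balls isolating single centers already carry value $sR$. The cost is a discretization and a case analysis that are unnecessary here. You already use the Edmonds--Johnson direction of the certificate in your first step, and the same direction gives $\mu(C)\le W$ for free. So the step you flagged as the main technical one is really a one-liner.

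The uniform-parity claim for shores splitting $C$ is standard. The paper makes it explicit with the involution $D\mapsto D\triangle\{a,b\}$, where $a\in S\cap C$ and $b\in C\setminus S$.
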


\begin{proof}
Choose $D$ uniformly from the even subsets of $C$; since $|C|=2s$,
$\mathbb E|D|=s$.  Every allowed cut is $(A\triangle D)$-odd with
probability at least $1/2$.  Indeed, if $S$ splits $C$, choose
$a\in S\cap C$ and $b\in C\setminus S$.  The involution $D\longmapsto D\triangle\{a,b\}$ preserves evenness and reverses the parity of
$|(A\triangle D)\cap S|$, since exactly one of $a,b$ lies in $S$;
thus exactly half of the choices are odd.
If $S$ does not split $C$, it is $A$-odd; since the even set $D\subseteq C$
lies entirely on one side, the cut remains $(A\triangle D)$-odd.
Thus $\mathbb E[\score_y(A\triangle D)]\ge W/2$.
By \eqref{eq:tree-score}, \cref{cor:mean-rounding}, the dynamic program,
and the packing certificate,
\[
 \mu(T_{\mathrm q})\ge\score_y(T_{\mathrm q})
 \ge\mathbb E[\score_y(A\triangle D)]\ge W/2.
\]

Put $R:=\max_{u\in U}d(u,C)$.  Every $C$-odd cut is allowed, and
\cref{lem:ff-separation} makes the points of $C$ pairwise at distance
at least $R$.  Hence $W\ge\mu(C)\ge sR$.

Let $D^\star$ attain $\Psi(A,U,s)$, map each of its points to a nearest
center in $C$, and let $E\subseteq C$ be the set of centers having odd
preimage multiplicity.  By \cref{lem:projection}, $\mu(D^\star\triangle E)\le sR$.
Since $E$ is even, every $(A\triangle E)$-odd cut is allowed: a cut not
splitting $C$ receives no parity contribution from $E$ and must therefore
be $A$-odd.  The packing certificate gives
$\mu(A\triangle E)\le W$.  Thus, by the matching triangle inequality,
\[
\Psi(A,U,s)
=\mu(A\triangle D^\star)
\le \mu(A\triangle E)+\mu(D^\star\triangle E)
\le W+sR
\le 2W.
\]
Combining the two bounds proves the claim.
\end{proof}

\begin{corollary}[Universal quarter guarantee]
\label{cor:quarter-global}
For every feasible even $k$, there is a deterministic polynomial-time
algorithm returning an exact-$k$ set $T$ such that
\[
    \mu(T)\ge\frac14\OPT_k.
\]
\end{corollary}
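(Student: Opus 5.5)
The plan is to obtain the corollary from \cref{thm:quarter-normalized} by a suitable choice of the reference set $A$, the candidate set $U$, and the even size $s$. The requirements are that every feasible output $A\triangle D$ with $D\subseteq U$, $|D|=s$, has exactly $k$ elements, and that $\Psi(A,U,s)\ge\OPT_k$. The endpoints $k=0$ and $k=n$ are exact, so I assume $2\le k<n$. I split into a sparse regime, where terminals are selected, and a dense regime, where omitted vertices are selected.

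\emph{Sparse regime $k\le n/2$.} Take $A=\varnothing$, $U=V$ and $s=k$. Then $s>0$ is even and $2s\le n=|U|$. Since $A\triangle D=D$, every candidate has size exactly $k$, and $\Psi(\varnothing,V,k)=\OPT_k$ by definition. Algorithm~\ref{alg:quarter} then returns an exact-$k$ set $T$ with $\mu(T)\ge\OPT_k/4$.

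\emph{Dense regime $k>n/2$, $n$ even.} Take $A=V$ (even), $U=V$ and $s=n-k$. This $s$ is even and positive, and $2s\le n$ holds because $k\ge n/2$. Every $D\subseteq V$ with $|D|=s$ gives $A\triangle D=V\setminus D$, which has size $k$. As $D$ ranges over these sets, $V\setminus D$ ranges over all $k$-subsets, so $\Psi(V,V,n-k)=\OPT_k$.

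\emph{Dense regime $k>n/2$, $n$ odd.} Here $V$ is not even, so I guess the omitted vertex. For each $v\in V$, put $A_v:=V\setminus\{v\}$ (even), $U:=A_v$ and $s:=n-1-k$. This $s$ is even, and $2s\le n-1=|U|$ because $k\ge(n-1)/2$. Since $D\subseteq A_v$, the output $A_v\setminus D$ has exactly $k$ elements.
\begin{itemize}
\item An optimal $k$-set $O$ misses some vertex $v$, because $k<n$. For that $v$, $O=A_v\setminus D$ with $D:=A_v\setminus O$ and $|D|=s$, so $\Psi(A_v,A_v,s)\ge\OPT_k$.
\item Running the algorithm for all $n$ choices of $v$ and returning the best set, with value computed by minimum-weight perfect matching, gives the quarter guarantee in polynomial time.
\item The degenerate case $s=0$, i.e.\ $k=n-1$, is not covered by the theorem, which requires $s>0$. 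It is handled exactly by evaluating $\mu(V\setminus\{v\})$ for every $v$.
\end{itemize}

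The main point needing care is not a computation but the bookkeeping of the hypotheses of \cref{thm:quarter-normalized}: evenness of $A$ and $s$, positivity of $s$, and $2s\le|U|$. The only genuine obstruction is the odd-$n$ dense case, where $V$ cannot serve as the reference set; guessing the omitted vertex resolves it. Determinism and polynomial running time are inherited from Algorithm~\ref{alg:quarter}, with at most $n$ repetitions.
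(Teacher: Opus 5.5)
Your proposal is correct and follows the paper's proof essentially step for step. It uses the same choices:
\begin{itemize}
\item $(A,U,s)=(\varnothing,V,k)$ for $k\le n/2$;
\item complement selection with $A=U$ and $s=|U|-k$ for $k>n/2$, enumerating $U=V\setminus\{v\}$ when $n$ is odd;
\item direct evaluation in the degenerate case $s=0$.
\end{itemize}
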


\begin{proof}
The case $k=0$ is trivial.  For $2\le k\le n/2$, use
$(A,U,s)=(\varnothing,V,k)$, giving $\Psi(A,U,s)=\OPT_k$.  If $k>n/2$,
take $U=V$ when $n$ is even and try every $U=V\setminus\{u\}$ when $n$ is
odd; set $A=U$ and $s=|U|-k$, running the normalized algorithm for $s>0$
and evaluating $U$ directly for $s=0$.  These choices satisfy $2s\le|U|$.
For an optimal $k$-set $O$, some such $U$ contains $O$; then
$D=U\setminus O$ has size $s$ and $A\triangle D=O$, while every feasible
$D'$ yields a $k$-set $A\triangle D'$.  Thus that branch has optimum
$\OPT_k$, and \cref{thm:quarter-normalized} proves the claim.
\end{proof}
\begin{theorem}[Overall guarantee]
\label{thm:overall}
For $0<k<n$, there is a deterministic polynomial-time $\rho(k/n)$-
approximation for exact-$k$ remote matching, with $\rho$ as in
\eqref{eq:intro-rho}.  The feasible endpoints $k=0,n$ are exact.
\end{theorem}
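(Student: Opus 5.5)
The plan is to run the three available algorithms, return whichever of their outputs has the largest matching cost $\mu$, and then check that in each density range one of the proven guarantees already meets the value fraction $1/\rho(p)$. The three algorithms are Algorithm~\ref{alg:centered} with $m=n$, Algorithm~\ref{alg:centered} with $m=3k/2$ (used only when $k\le 2n/3$), and the procedure of \cref{cor:quarter-global}. All three are deterministic and polynomial-time by \cref{lem:laminarization,lem:tree-dp,lem:allowed-family}. Each returns an exact-$k$ set, and $\mu$ of each output is computable by minimum-weight perfect matching, so taking the best output is also polynomial.

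When $m=3k/2$, it must be a legal parameter: $3k/2$ should be an integer in $\{k,\ldots,n\}$. The constraint $k\le 2n/3$ gives $3k/2\le n$. Since $k$ is even, $3k/2$ is an integer. Then $\theta=k/m=2/3$, as required by \cref{thm:sparse}. The endpoints $k=0$ and $k=n$ are exact, as noted at the start of \cref{sec:approximation}.

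The case analysis uses the three guarantees in value-fraction form: \cref{thm:full} gives $\min\{p,2(1-p)\}$, \cref{thm:sparse} gives $2/7$ for $k\le 2n/3$, and \cref{cor:quarter-global} gives $1/4$ always.
\begin{itemize}
\item For $0<p\le 2/7$: \cref{thm:sparse} applies because $p\le 2/3$, giving $2/7=1/\rho(p)$.
\item For $2/7\le p\le 2/3$: we have $p\le 2(1-p)$ exactly when $p\le 2/3$, so \cref{thm:full} gives value fraction $p=1/\rho(p)$.
\item For $2/3\le p\le 7/8$: the minimum in \cref{thm:full} is $2(1-p)$, which is at least $1/4$ because $p\le 7/8$, matching $\rho(p)=1/[2(1-p)]$.
\item For $7/8\le p<1$: \cref{cor:quarter-global} gives $1/4$.
\end{itemize}

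At each breakpoint ($p=2/7$, $2/3$, $7/8$) the adjacent formulas agree, so the overlapping closed intervals in the definition of $\rho$ are consistent. The best of the three outputs therefore attains at least $\OPT_k/\rho(p)$ in every range.

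No step is a real obstacle, since all the substantive work is in the cited results. The points needing care are the integrality and admissibility of $m=3k/2$ noted above, and making sure that the max of the three outputs is taken using the true $\mu$ value rather than the tree scores.
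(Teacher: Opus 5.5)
Your proposal is correct and is the paper's proof: you run the centered all-cut algorithm with $m=n$, additionally with $m=3k/2$ when $p\le 2/3$, and the universal quarter algorithm, return the best output, and read off the four branches of $\rho$ from the value fractions $\min\{p,2(1-p)\}$, $2/7$, and $1/4$. Your extra checks, namely that $m=3k/2$ is an admissible integer parameter and that outputs are compared by their true $\mu$ values, are details the paper leaves implicit.
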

\begin{proof}
Solve the feasible endpoints directly.  Otherwise, run the centered
all-cut algorithm with $m=n$ and, when $p\le2/3$,
with $m=3k/2$; also run the universal factor-$4$ algorithm, and return
the best solution.  The value
guarantees $\min\{p,2(1-p)\}$, $2/7$ (when $p\le2/3$), and $1/4$
give exactly the four branches of \eqref{eq:intro-rho}.
\end{proof}

\section{Proofs of the Structural Results}
\label{sec:structural-proofs}

We prove concavity and the density bound from \cref{sec:framework}.
The routine dynamic-programming proof is in Appendix~\ref{app:tree-dp-details}.

\par
\begin{trivlist}
\item[\hskip\labelsep{\itshape Proof of \cref{lem:concavity}.}]
Write $g:=\mathcal G_{\mathcal T,A,U}$, and let $B,C\subseteq U$ attain
$g(r-2)$ and $g(r+2)$, respectively.  Put $\Delta:=B\triangle C$.
Membership of an edge in a tree join is determined by the parity of
the selected leaves on either side, so tree joins are linear over
symmetric difference:
\begin{equation}
 J_{\mathcal T}(A\triangle B)\triangle
 J_{\mathcal T}(A\triangle C)
 =
 J_{\mathcal T}(B\triangle C)
 =
 J_{\mathcal T}(\Delta).
\label{eq:concavity-linearity}
\end{equation}
For each nontrivial component $K$ of $J_{\mathcal T}(\Delta)$, put
$a_K:=|(C\setminus B)\cap V(K)|$ and
$b_K:=|(B\setminus C)\cap V(K)|$.
Since $\sum_K(a_K-b_K)=|C|-|B|=4$, some component satisfies
$a_K>b_K$.
The odd-degree vertices of $K$ are $\Delta\cap V(K)$, so by the
handshaking lemma $a_K+b_K$ is even.  Hence
$a_K-b_K$ is a positive even integer; consequently $a_K\ge2$, and
$K$ contains distinct $x,y\in C\setminus B$.

Let $P$ be the unique $x$--$y$ path in $\mathcal T$, and set
$B':=B\cup\{x,y\}$ and $C':=C\setminus\{x,y\}$.  Then
$|B'|=|C'|=r$.  Since $x,y$ lie in one component of $J_{\mathcal T}(\Delta)$,
their unique path satisfies $P\subseteq J_{\mathcal T}(\Delta)$.
Since $J_{\mathcal T}(\{x,y\})=P$, linearity shows that the new joins
are obtained by toggling $P$ in both original joins.  Every edge of $P$
belongs to exactly one original join, so their total weight is preserved.
Since $B'$ and $C'$ are feasible $r$-sets,
\[
 2g(r)
 \ge w(J_{\mathcal T}(A\triangle B'))
     +w(J_{\mathcal T}(A\triangle C'))
 =g(r-2)+g(r+2),
\]
as required.
\end{trivlist}

\par
\begin{trivlist}
\item[\hskip\labelsep{\itshape Proof of \cref{thm:density}.}]
Use the full binary rooting from the theorem statement and apply
\cref{lem:concavity} with $A=\varnothing$ and $U=V$.  Let $g$ denote the
resulting exact-cardinality profile and its linear interpolation.

Suppose $p\le2/3$ and set $a:=p/(2(1-p))\le1$.
Construct a random edge set as follows.  At the root select both child
edges with probability $p$, and otherwise neither.  Elsewhere, if the
parent edge is absent, select both children with probability $a$ and
otherwise neither; if it is present, select one child uniformly.
Every internal vertex has even degree: it has degree $0$ or $2$ when
its parent edge is absent, and degree $2$ when its parent edge is
present.  Hence the odd-degree vertices are leaves; let $S$ be their
set.  The handshaking lemma makes $S$ even, and uniqueness of tree
joins identifies the selected edge set with $J_{\mathcal T}(S)$.

Root edges have marginal $p$.  Inductively, if a parent edge has
marginal $p$, each child edge has marginal
\[
(1-p)a+p\cdot\frac12=p.
\]
Thus every edge has marginal $p$; in particular every leaf belongs to
$S$ with probability $p$, and
\[
\mathbb E|S|=pm=q,
\qquad
\mathbb E\bigl[w(J_{\mathcal T}(S))\bigr]=pW.
\]
By \cref{cor:mean-rounding}, $g(q)\ge pW$.

It remains to consider $p\ge2/3$.  Applying the same construction at
density $2/3$ and \cref{cor:mean-rounding} gives
\begin{equation}
 g(2m/3)\ge 2W/3,
 \label{eq:density-anchor}
\end{equation}
where $g$ is understood via its piecewise-linear interpolation if
$2m/3$ is not an admissible even cardinality.

If $m$ is even, $g(m)=w(J_{\mathcal T}(V))\ge0$.  Concavity and
\eqref{eq:density-anchor} place $g$ above the chord through
$(2m/3,2W/3)$ and $(m,0)$.  Consequently,
\[
 g(q)\ge2(1-q/m)W=2(1-p)W.
\]

If $m>3$ is odd, let $B_e$ be the even shore of edge $e$; it has
$|B_e|\ge2$.  The edge lies in $J_{\mathcal T}(V\setminus\{v\})$
exactly when $v\in B_e$.  Averaging gives
\[
 \frac1m\sum_{v\in V}w(J_{\mathcal T}(V\setminus\{v\}))
 =
 \frac1m\sum_{e\in E(\mathcal T)}|B_e|w(e)
 \ge\frac{2W}{m}.
\]
Thus $g(m-1)\ge2W/m$.  Interpolating with \eqref{eq:density-anchor}
gives the same lower bound $g(q)\ge2(1-p)W$.
The case $(m,q)=(3,2)$ belongs to the first range.

Tightness is witnessed by the star and cherry constructions described in
Appendix~\ref{app:density-tightness}; they attain respectively the two
branches $pW$ and $2(1-p)W$.
\end{trivlist}

\subsubsection*{Acknowledgements: Generative AI Use Declaration.}
OpenAI ChatGPT was used in developing and preparing this work to explore algorithmic and reduction ideas, draft and check proof arguments, produce diagnostic code, locate literature, and assist with exposition and LaTeX, including the code plotting the stated approximation bound. The human authors have verified the correctness of the content and retain responsibility for the correctness, originality, references, and final content of the work.

\subsubsection*{Disclosure of Interests.}
The authors have no competing interests to declare that are relevant to the content of this article.

\clearpage
\appendix
\section{A One-Shot Factor-$4$ Family}
\label{app:MN-factor4}

We finally record a simple family showing that the one-shot randomized
skeleton procedure can lose a factor arbitrarily close to $4$ under a
valid implementation of its arbitrary parity-repair step.

Fix $m\ge1$, put $K=4m$, and partition $n=3K=12m$ points into
$K$ triples $C_1,\ldots,C_K$.  Let
\[
 A=C_1\cup\cdots\cup C_{2m},
 \qquad
 B=C_{2m+1}\cup\cdots\cup C_{4m},
\]
and set $L=m^2$.  For distinct points define
\[
 d(x,y)=
 \begin{cases}
  1/2, & x,y\in C_i\text{ for some }i,\\
  1,   & x,y\text{ lie in different triples on the same side},\\
  L,   & x\in A,\ y\in B\text{ or vice versa}.
 \end{cases}
\]
This is a metric.  Farthest-first selects exactly one point from each
triple: after a triple has been represented, its remaining points are at
distance $1/2$ from the skeleton, whereas every unrepresented triple has
distance at least $1$.  Hence the skeleton $Y$ contains $2m$ points on
each side and
\[
 \mu(Y)=K/2.
\]
Moreover, each Voronoi cell is exactly its triple.

There is a feasible $K$-set containing $2m-1$ points of $A$ and
$2m+1$ points of $B$.  Every perfect matching of such a set contains a
cross edge, and conversely one cross edge always suffices.  Therefore
\[
 L\le \OPT_K\le L+K/2,
\]
so $\OPT_K=(1+o(1))L$.

Now independently sample every point of $Y$ with probability $1/2$, and
let $Z_0$ be the resulting set.  The two parities
\[
 |Z_0\cap A|\bmod2,\qquad |Z_0\cap B|\bmod2
\]
are independent uniform bits.  Consider the valid parity-repair rule
that, when $|Z_0|$ is odd, deletes a sampled point from the side whose
sampled cardinality is odd.  After this repair, both sides are odd
precisely when both original parities were odd, an event of probability
$1/4$.

Padding adds points in pairs from a single Voronoi cell and therefore
does not change either side parity.  Consequently, with probability
$1/4$ the padded solution has value between $L$ and $L+K/2$, while
otherwise its value is at most $K/2$; the algorithm also retains $Y$.
Thus
\[
 \mathbb E[\mathrm{ALG}]
 =\frac14L+O(K).
\]
Since $L=m^2$ and $K=4m$,
\[
 \frac{\OPT_K}{\mathbb E[\mathrm{ALG}]}\longrightarrow 4.
\]

This example concerns one execution and the above admissible choice in
the arbitrary-deletion step; it does not claim a factor-$4$ lower bound
for every parity-repair rule or for the amplified algorithm obtained by
independent repetitions.

\section{Supplementary Structural Details}
\label{app:structural-details}

\subsection{Tightness of the density profile}
\label{app:density-tightness}
For the left branch of \cref{thm:density}, take an equal-weight star with
unit leaf edges and refine it to a binary tree by zero-weight edges.  Then
$W=m$, and every exact-$q$ set has join weight $q=pW$.  For the right
branch, let $m\ge4$ be even, pair the leaves into cherries, give each
cherry-to-center edge unit weight and its two leaf edges zero weight, and
again use zero-weight binary refinements.  Here $W=m/2$; for $p\ge2/3$,
an exact-$q$ set makes at most $m-q$ cherry edges odd, attained by omitting
$m-q$ leaves from distinct cherries.  Hence the optimum join weight is
$m-q=2(1-p)W$.

\subsection{Exact-cardinality tree dynamic program}
\label{app:tree-dp-details}
Root $\mathcal T$ at its full binary root, and let $L_v$ be the labelled
leaves below node $v$.  If $|D\cap L_v|=j$, the parent edge of $v$ belongs
to the join exactly when
\[
 b_v(j):=(|A\cap L_v|+j)\bmod2=1.
\]
Thus the toggle count already determines terminal parity; no separate
parity state is required.  Let $F_v(j)$ be the maximum weight of selected
edges strictly below $v$, over $D\subseteq U\cap L_v$ with $|D|=j$.
Store only $0\le j\le q$.  At a leaf $v$, set $F_v(0)=0$ and, when
$v\in U$ and $q\ge1$, $F_v(1)=0$; all other states have value $-\infty$.
For children $u,z$ of $v$, with edge weights $\ell_u,\ell_z$, the recurrence is
\[
 F_v(j)=\max_{a+b=j}
 \{F_u(a)+F_z(b)+\ell_u b_u(a)+\ell_z b_z(b)\}.
\]
The maximization ranges over feasible child states.  Each feasible toggle
set splits into its two child sets, proving that its value is represented.
Conversely, two feasible child states combine to a feasible parent state.
The only newly charged edges are the two child edges, whose membership
is fixed by their child counts.  Induction proves the state meaning and
hence the recurrence.  The root has no parent edge, and its reference
and toggle cardinalities are even, so $F_{\rm root}(q)$ is exactly the
required optimum.  Predecessor pointers recover a maximizing set.

There are $O(n(q+1))$ states and $O(n(q+1)^2)$ arithmetic operations,
where $n=|V|$.  All values are sums of input edge weights; rational inputs
therefore give polynomial bit complexity.  A one-leaf tree is handled
directly.  This proves \cref{lem:tree-dp}.

\section{Polynomial-Time Laminarization}
\label{app:laminarization}
We prove \cref{lem:laminarization}, including the algorithmic qualification.
The empty family or a one-point ground set has value zero, so assume
$N:=|E_X|\ge1$ and $\mathcal F\ne\varnothing$.  The packing polytope is
bounded: every variable appears in a nonempty cut and is bounded by an
edge cost.  The assumed minimum-cut oracle separates its covering dual;
separation--optimization equivalence also recovers an optimal packing
with polynomial support~\cite{GLS88}.

For a shore $S$, put $h(S)=|S|(|X|-|S|)$ and
$H(y)=\sum_{S\in\mathcal F}h(S)y_S$.  Among maximum-value packings,
choose one minimizing $H$.  If positive shores $S,T$ cross, uncrossability
supplies either $S\cap T,S\cup T$ or $S\setminus T,T\setminus S$ in
$\mathcal F$.  Shift $\lambda=\min\{y_S,y_T\}>0$ from $S,T$ to this
pair.  The corresponding cut-incidence inequalities hold coordinatewise,
so no edge load increases, and total packing mass is unchanged.
Writing $i,a,b,o$ for the positive sizes of
$S\cap T,S\setminus T,T\setminus S,X\setminus(S\cup T)$, respectively,
the decrease in $H$ is $2ab\lambda$ for the first pair and $2io\lambda$
for the second.  Both are positive.  This contradicts its choice and
proves laminarity.  Root-oriented crossing shores have a nonempty outside
region, since it contains the designated root.

To implement the secondary objective in polynomial time, scale the
rational distances to integers and let $D\ge1$ bound them.  A packing
vertex has at most $N$ positive variables.  Its basis matrix has integer
entries zero or one, and absolute determinant at most $N!$.
Consequently two distinct vertex values of the unperturbed mass differ
by at least $1/(N!)^2$.  Moreover, every feasible packing satisfies
\[
 0\le H(y)=\sum_{e\in E_X}\sum_{S:e\in\delta_X(S)}y_S
 \le \sum_{e\in E_X}d_X(e)\le ND.
\]
Choose $\eta=1/(4(N!)^2ND)$ and maximize
$\sum_S(1-\eta h(S))y_S$.  The perturbation cannot prefer a smaller
vertex mass over a larger one; among maximum-mass solutions it minimizes
$H$.  Its binary encoding length is polynomial.

The perturbed covering constraints are
$x(\delta_X(S))\ge1-\eta h(S)$.  They are separated by finding a
minimum member of $\mathcal F$ in nonnegative capacities $x_e+\eta$:
a violation exists exactly when that minimum is below one.  The same
oracle therefore solves the perturbed program and recovers a
polynomial-support optimal packing.  The strict uncrossing argument
makes its support laminar.  Finally scale the weights back; its
\emph{unperturbed} mass is optimal for the original program.  This proves
the lemma without assuming that an arbitrary uncrossing sequence has
polynomial length.

\end{document}